\newcommand{\mc}[1]{\mathcal{#1}}
\newcommand{\bs}[1]{\boldsymbol{#1}}
\newcommand{\rev}[1]{{\color{blue}#1}}
\newcommand{\rev}[1]{#1}
\newcommand{\revj}[1]{{\color{red}#1}}
\newcommand{\revj}[1]{#1}
\newcommand{\revjj}[1]{{\color{blue}#1}}
\newcommand{\revjj}[1]{#1}
\def\ps@headings{%
\def\@oddhead{\mbox{}\scriptsize\rightmark \hfil \thepage}%
\def\@evenhead{\scriptsize\thepage \hfil \leftmark\mbox{}}%
\def\@oddfoot{}%
\def\@evenfoot{}}
\newtheorem{thm}{Theorem} 
\newtheorem{lem}{Lemma} 
\newtheorem{pps}{Proposition}
\newtheorem{defn}{Definition}
\newcommand*{\QEDB}{\hfill\ensuremath{\square}}%
\begin{document}
%
\title{Economics of Mobile Data Trading Market\vspace{-2mm}}



\author{Junlin Yu, 
        Man Hon Cheung,
        and~Jianwei Huang,~\IEEEmembership{Fellow,~IEEE}
\IEEEcompsocitemizethanks{\IEEEcompsocthanksitem J. Yu is with Ant Group, Hangzhou, China; E-mail: julian.yjl@antgroup.com. M. H. Cheung is with the Department of Computer Science, City University of Hong Kong, Hong Kong, China; E-mail: mhcheung@cityu.edu.hk. 
J. Huang (corresponding author) is with the School of Science and Engineering, The Chinese University of Hong Kong, Shenzhen, and Shenzhen Institute of Artificial Intelligence and Robotics for Society; E-mail: jianweihuang@cuhk.edu.cn.
This work is supported by the Shenzhen Institute of Artificial Intelligence and Robotics for Society and the Presidential Fund from the Chinese University of Hong Kong, Shenzhen.
\IEEEcompsocthanksitem Part of this paper was presented in \cite{r:wiopt}.
}
}

\IEEEtitleabstractindextext{
\begin{abstract}
To exploit users' heterogeneous data demands, several mobile network operators worldwide have launched the mobile data trading markets, where users can trade mobile data quota with each other. In this paper, we aim to understand the importance of data trading market (DTM) by studying the users' operator selection and trading decisions, and analyzing the operator's profit maximizing strategy. We model the interactions between the mobile operator and the users as a three-stage Stackelberg game. In Stage I, the operator chooses the operation fee imposed on sellers to maximize its profit. In Stage II, each user chooses his operator. In Stage III, each DTM user chooses his trading decisions. We derive the closed-form expression of the unique Nash equilibrium (NE) in Stages II and III, where every user proposes the same price such that the total demand matches with the total supply. 
  We further show that the Stage I's problem is convex and compute the optimal operation fee. Our analysis and numerical results show that an operator with a small initial market share can increase its profit by proposing a DTM, which is in line with the real-world situation in Hong Kong.
\end{abstract}
\begin{IEEEkeywords}
Mobile data trading market, data pricing, Stackelberg game, network economics.
\end{IEEEkeywords}}
\maketitle
\thispagestyle{empty} 

%
\IEEEpeerreviewmaketitle


\section{Introduction}\label{sec:1}
\subsection{Background and Motivation}\label{sec:1a}
Due to the significant increase of video traffic on smartphones and tablets, global mobile data traffic has been growing tremendously in the past few years \cite{r:eric}. To alleviate the tension between the mobile data demand and network capacity, mobile network operators have been experimenting with several innovative pricing schemes, such as time and location dependent pricing, shared data plans, and sponsored data pricing \cite{r:ha,r:yu2,r:andrews2,r:ma}. However, these pricing schemes do not fully take
advantage of the \emph{heterogeneous demands} across all mobile users, as a user's unused portion of his month data quota will be wasted even if another user is in need of additional data in all these schemes. 

Seizing this opportunity, China Mobile Hong Kong (CMHK), a Hong Kong mobile operator with the smallest market share \cite{r:cmhkreport}, launched the 2nd exChange Market (2CM) \cite{r:cmhk} in 2014. It is a mobile data trading platform that allows its users to trade their 4G mobile data quota with each other. In this platform, a seller can list his desirable selling price on the platform, together with the amount of data to be sold (up to his monthly data quota). If there is a buyer who is willing to buy the data at the listed price, the platform will clear the transaction and transfer the corresponding amount of the data to the buyer's monthly quota limit. 

With such a \emph{data trading market} (DTM), users can make better use of their quota in their data plan. However, operators face \emph{tradeoff} in deploying the DTM. On one hand, DTM helps operators attract/retain customers by enabling them a better control and ownership of their packages \cite{r:conext}. In addition, CMHK can benefit by charging the sellers an operation fee for each GB of sold data in the form of ``transaction tax'' \cite{r:cmhk}. On the other hand, CMHK's profit may be negatively affected, because its revenue from users' overage usage is lower with the DTM. Hence, it is not clear to the operator whether it is profitable to deploy this market.

Apart from studying the profitability of the market, we are also interested in the market mechanism.
The current 2CM market mechanism is not efficient, as only a seller can list his trading price and quantity. This means that a buyer needs to frequently check the platform to see whether he is willing to buy according to the current (lowest) selling price. This motivates us to consider a multi-unit double auction mechanism based on the one adopted in stock markets \cite{r:garbade}, \cite{r:stoll}. With such a mechanism, in every time slot, a user can choose his role (seller or buyer) and submit his (selling or buying) price and quantity to the platform.\footnote{\rev{If a user is not willing to trade in a time slot, he can set his trading quantity as zero.}} The platform clears the market at a market clearing price, which the buying price of some buyers is larger than the selling price of some sellers. The sellers with very high selling prices and the buyers with very low buying prices may not get all their proposed quantities transacted. 

Under such a mechanism, the mobile network operator can obtain revenue from the DTM in two ways. First, the operator profits from the gap between the subscription fee and the service cost for each user. Second, the operator charges the sellers an \emph{operation fee} for each unit of sold data, which is unique to a DTM. Hence, it needs to decide the optimal operation fee to maximize its profit.

In this paper, we would like to understand three important questions in such a DTM:

\begin{itemize}
	\item Why does the operator propose a DTM?
	
	\item How should the operator set the operation fee to maximize its profit?
		
	\item Given a fixed operation fee, what are the equilibrium trading behaviors among the users?
			
\end{itemize}


\subsection{Contributions}\label{sec:1b}

To answer the above questions in a coherent framework, we model the interactions between the mobile operator and the users as a three-stage \emph{Stackelberg game} with different time scales, as shown in Fig. \ref{fig:1}. We assume that there are multiple mobile operators, while only one operator proposes the DTM.\footnote{To the best of our knowledge, CMHK is the only operator in Hong Kong that deploys the DTM \cite{r:cmhk}.} We refer to this operator as the \emph{DTM operator}, and its users as the \emph{DTM users}. The operator and users make decisions in different time scales (which will be discussed in details in Section \ref{sec:2}). 
  At the beginning of every subscription horizon (Stage I), the DTM operator optimizes its operation fee imposed on the sellers to maximize its profit. 
	After the DTM operator proposes the operation fee (Stage II), each user chooses his operator given the operation fee. 
	At the beginning of every data trading horizon (Stage III), the DTM users decide their roles as sellers or buyers and the corresponding trading prices and quantities given the users choices of operators. 

  Analyzing such a two-sided market is very challenging, as it involves the complicated interactions among the operators and users in three stages. Moreover, it is difficult to guarantee the existence of the Nash equilibrium (NE) for the users' decisions due to the discontinuity of their utility functions (to be discussed in details in Section \ref{sec:2b}) \cite{r:basar,r:poorm}.
	Nevertheless, we are able to characterize the optimal operation fee in Stage I and the unique NE in Stages II and III in closed-form.

\begin{figure}[t]
\centering
\includegraphics[width=9.5cm, trim = 2cm 1.5cm 0cm 1.5cm, clip = true]{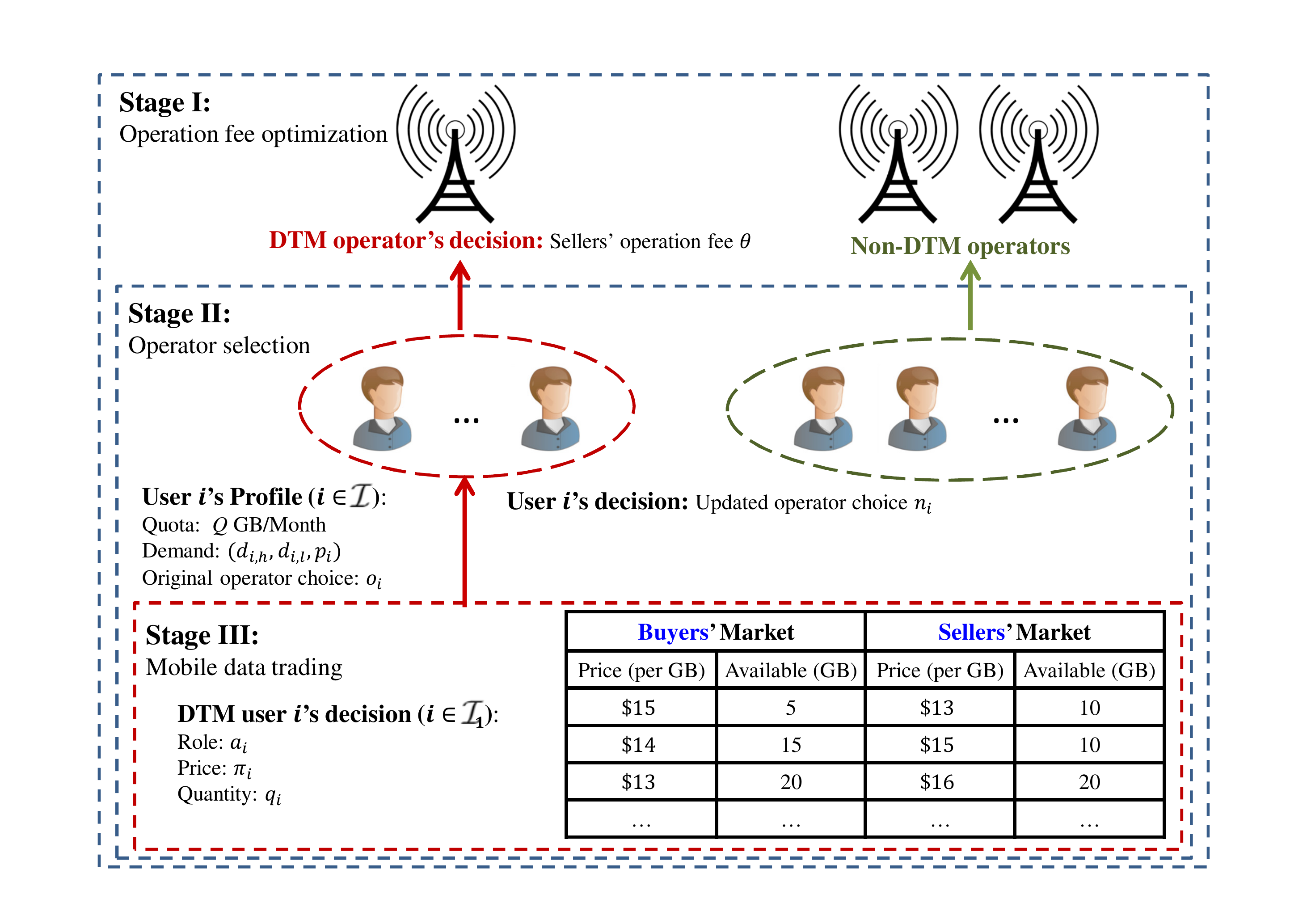} 
\vspace{-3mm}
	  \caption{Stackelberg Model of the Operator and Users' Interactions.}\label{fig:1}
	  \vspace{-5mm}
\end{figure}

In summary, our key results and contributions are as follows.
\begin{itemize}
\item \emph{Novel two-sided data trading market formulation}: Our model captures several key practical issues, such as the users' decision of trading prices and quantities without knowing in advance how much data they can sell or buy at the proposed prices. These issues have not been considered before in the context of mobile data trading. 

\item \emph{Closed-form solution of the three-stage problem}: Despite the discontinuity in the user's utility functions, we characterize the unique NE for the users' operator selection and mobile data trading game in closed-form, where different types of buyers and sellers propose the \emph{same price} such that the total supply matches the total demand in the DTM. 
  In addition, we show the operator's operation fee optimization problem is convex and derive the closed-form solution.

\item \emph{Benefit of data trading market to a small operator}: We show that an operator with a small  market share benefits from launching a DTM. This is in  line  with  the  real-world observation in Hong Kong that CMHK, which is the smallest mobile operator \cite{r:cmhkreport}, is the only DTM operator.
\end{itemize}

\subsection{Related Literature on Mobile Data Trading}\label{sec:1c}
The research on mobile DTM only emerged recently \cite{r:wanghx,r:zheng,r:zheng2,r:yu,r:andrews,wang_op18,wang_ev19}. In \cite{r:wanghx}, Wang \emph{et al.} proposed a user-initiated network for cellular users to trade data plans by leveraging personal hotspots with users' smartphones without considering the operator's participation and cooperation. In \cite{r:zheng} and \cite{r:zheng2}, Zheng \emph{et al.} studied the users' optimal bids in the market and proposed an algorithm for mobile operator to match the buyers and sellers. In \cite{r:yu}, Yu \emph{et al.} studied a single user's optimal mobile data trading problem under the future demand uncertainty from a behavioral economics perspective. However, the authors in \cite{r:zheng}, \cite{r:zheng2}, and \cite{r:yu} assumed that the sellers and buyers in the DTM can always bid prices that ensure that all their demand are satisfied and all their supply are cleared.
In \cite{r:andrews}, Andrews presented a dynamic programming problem to characterize the trading behavior of mobile users without considering the interactions between the operator and the users.
  Wang \emph{et al.} in \cite{wang_op18} studied the data plan sharing through personal hotspots and proposed a pricing mechanism that takes into account the uncertainty of mobility and sharing cost.
  Wang \emph{et al.} in \cite{wang_ev19} analyzed the economic viability of offering a DTM with rollover mechanism to explore the connection between user-flexibility and time-flexibility.

 To the best of our knowledge, our work in \cite{r:wiopt} is the first paper that studied the DTM involving the active decisions of both the operator and the users. Specifically, we formulated a two-stage decision problem, where the mobile operator determines the operation fee in the first stage, while the subscribers determine their trading decisions in the second stage.
  In this paper, as a practical extension to \cite{r:wiopt}, we further consider the \emph{market competition} between operators with and without DTM. As a result, we include a new stage regarding the users' operator selections and formulate a three-stage decision problem, which significantly complicates the problem structure. It also complicates the operation fee optimization in the first stage, as we need to keep track of the fraction of users choosing between the DTM and non-DTM operators. Nevertheless, we are able to characterize the optimal solution and the NE in closed-form.

The rest of the paper is organized as follows. In Section \ref{sec:2}, we introduce the system model. In Section \ref{sec:2b} and Section \ref{sec:3e}, we analyze the users' mobile data trading game and operator selection game in Stages III and II, respectively. In Section \ref{sec:4}, we analyze the mobile operator's operation fee optimization problem in Stage I. We present the numerical results in Section \ref{sec:sim}, and conclude in Section \ref{sec:6}.

\section{System Model}\label{sec:2}
In this work, we study the operator's operation fee optimization problem and the users' joint operator selection and mobile data trading problem comprehensively.
We first introduce the three-stage sequential interactions model in Section \ref{sec:new2a}.
Then, we introduce the users' mobile data trading platform in Section \ref{sec:new2b}.
Finally, we outline the solution of the three-stage Stackelberg game model in Section \ref{sec:new2c}.
The list of key notations are shown in Table \ref{table:variable}.

\begin{table}[t]
\renewcommand{\arraystretch}{1.2}
\caption{Key Notation}
\label{table:variable}
\begin{tabular}{ll}
\toprule
\textbf{Notation} & \textbf{Meaning} \\
\midrule
 $i$, $\mathcal{I}$, $\mathcal{I}_1$ & User index, set of users, and set of DTM users\\
\hline
$\theta$ & DTM operation fee on sellers in Stage I\\ %
\hline
$o_i$, $n_i$ & User $i$'s original operator choice  \\ 
 & and updated operator choice in Stage II \\
\hline
 $\bs{x}_i = (a_i,\pi_i,q_i)$ & DTM user $i$'s trading decision in Stage III: \\ 
 & Role $a_i$, trading price $\pi_i$, trading quantity $q_i$ \\
\hline
$\bs{r}_i(\bs{x})$ & DTM user $i$'s transaction quantity allocated \\ 
 & by the DTM mechanism (i.e., Algorithm \ref{algo:dtm}) \\
\hline
$U_{i,1}(r_i(\boldsymbol{x}))$ & DTM user $i$'s expected payoff defined in \eqref{eq:1}\\ 
\hline
$U_{i,0}$ & Non-DTM user $i$'s expected payoff defined \\ 
 & in \eqref{eq:stage2} \\
\hline
 $U_i(\bs{n})$ & User $i$'s expected payoff given all users' \\ 
 & operator choices $\bs{n}$ defined in \eqref{eq:u} \\
\hline
$\revj{Q_i}$ & \revj{User $i$'s monthly quota} \\ 
\hline
$\revj{Q}$ & \revj{Mean value of $Q_i$'s distribution} \\ 
\hline
$d_i\in\{d_{i,h}, d_{i,l}\}$ & User $i$'s demand (high or low demand) \\ 
\hline
$p_i$ & Probability that user $i$ has a high demand \\ 
\hline
$D_h$, $D_l$ & Mean values of $d_{i,h}$ and $d_{i,l}$'s distributions \\ 
\hline
 $L(\cdot)$ & Satisfaction loss function defined in \eqref{equ:satloss}\\ 
\hline
 $\kappa$ & Operator's overage usage fee imposed on users\\ 
\hline
$\hat{\pi}(\boldsymbol{n},\theta)$ & Market clearing price (Stage III's analysis) \\ 
\hline
$\boldsymbol{n}^*(\theta)$ & Users' operator choices at NE \\
 & (Stage II's analysis) \\
\hline
$\hat{\pi}(\theta) $ & Market clearing price (Combining both Stage II\\ %
 ($\triangleq \hat{\pi}(\boldsymbol{n}^*(\theta),\theta)$) & and Stage III's analysis)\\
\hline
 $\alpha$ & DTM operator's original market share\\ 
\hline
$e_i(n_i)$ & User $i$'s switching cost for choosing operator $n_i$\\ 
\hline
$P_L(\pi)$, $P_H(\pi)$ & DTM user's type thresholds at price $\pi$\\ 
\hline
$P_L'(\pi)$, $P_H'(\pi)$ & Non-DTM user's type thresholds at price $\pi$\\ 
\hline
$P(\theta)$ & DTM operator's profit given operation fee $\theta$ \\ 
 & defined in \eqref{eq:8} \\ 
\bottomrule
\end{tabular}
\end{table}

\subsection{Three-Stage Sequential Interactions Model}\label{sec:new2a}
As illustrated in Fig. \ref{fig:1}, we consider a market consisting of multiple mobile operators and a set $\mathcal{I} = \{1, ... , I\}$ of users, where one of the operators (i.e., the DTM operator) deploys a mobile DTM \cite{r:cmhk}. Among these $I$ users, a set $\mathcal{I}_1$ of users are DTM users. We use $n_i = 1$ to denote that user $i$ is a DTM user and $n_i = 0$ otherwise, so $\mathcal{I}_1 \triangleq \{j \in \mathcal{I}: n_j = 1 \}$.
  We assume that the number of users in the market is very large (e.g., $I\rightarrow\infty$), hence the impact of a single user's action on the whole population can be ignored \cite{r:asu}.

\begin{figure}[t]
\centering
\includegraphics[width=9cm, trim = 1.5cm 10cm 2.5cm 1cm, clip = true]{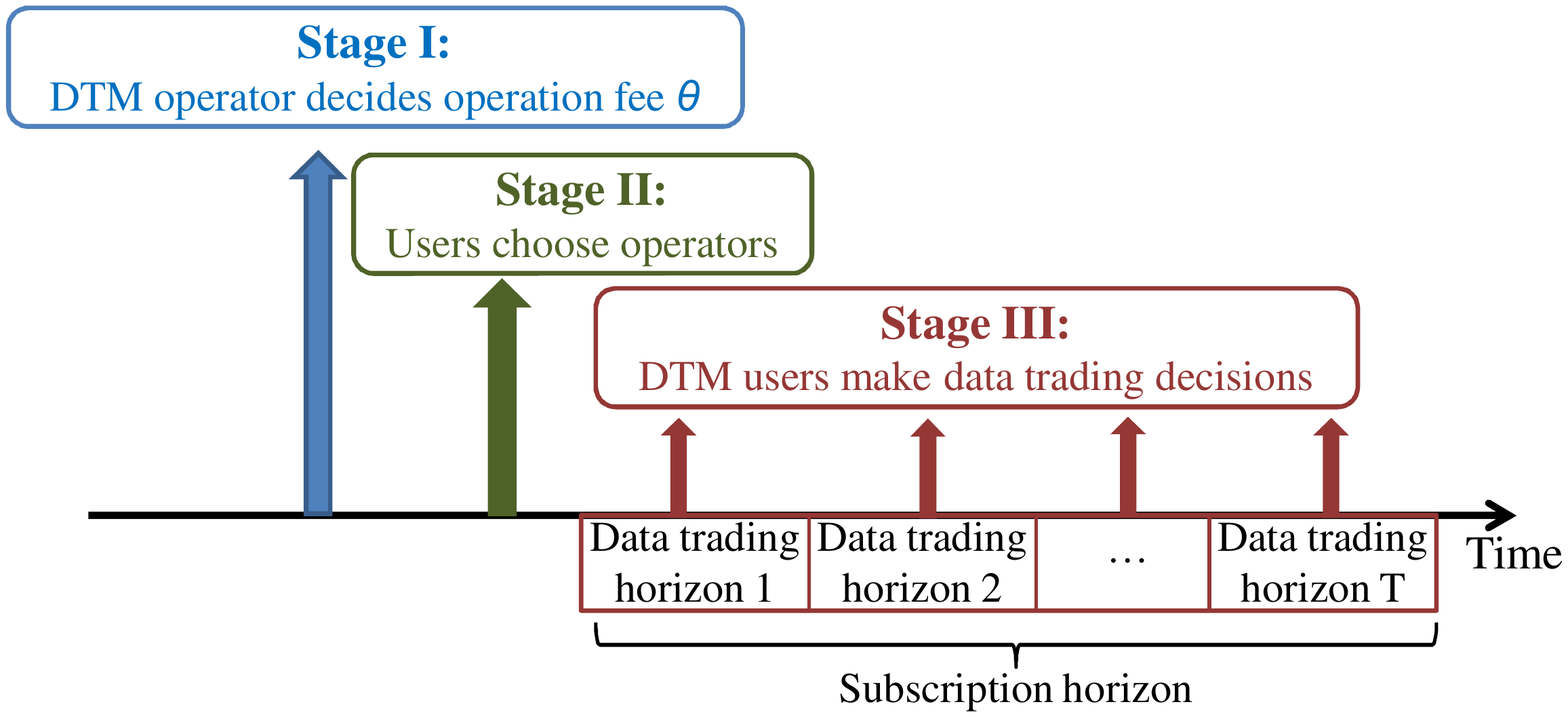}
\vspace{-3mm}
	  \caption{Timescales of the sequential decisions by the operator and the users.}\label{fig:t}
	  \vspace{-5mm}
\end{figure}

  We model the interactions between the mobile operator and the users in three stages with different time horizons as illustrated in Fig.~\ref{fig:t}. 
	First, we assume that the users' \emph{subscription horizon} is one year because the unit cost of providing services (e.g., the annual license fee for mobile spectrum \cite{r:ofcom}) usually varies annually. 
	Second, according to \cite{r:length}, the period of CMHK's data plan contract is 12 months, hence users can only change operator at the beginning of every subscription horizon. 
	Third, we assume that a data trading horizon is a billing cycle (or a month).\footnote{We have conducted a survey of 51 CMHK users regarding the trading frequency. The survey shows that about 60\% of the users (31/51) only trade once during a billing cycle. We assume that all the users trade at the beginning of the billing cycle.} Hence, one subscription horizon can be divided into $T=12$ \emph{data trading horizons}. 
The three stages are as follows:
\begin{itemize}
\item Stage I: \rev{At the beginning of every subscription horizon}, the DTM operator decides the operation fee $\theta$ charged on the sellers for each unit of sold data. 
\item Stage II: After the DTM operator has proposed the operation fee $\theta$, the users choose their operators $\boldsymbol{n}=(n_1,\ldots,n_I)$ for the entire subscription horizon. 
\item Stage III: At the beginning of each data trading horizon, the DTM users decide their trading decisions $\boldsymbol{x}=(\boldsymbol{x}_1,\ldots,\boldsymbol{x}_I)$.\footnote{\rev{If we allow each user to trade multiple times in a billing cycle, we can formulate the DTM user’s decision process as a dynamic programming (DP) problem \cite{wang_ev19}.} \revj{However, it is challenging to derive the DTM users' equilibrium trading decisions (in Stage III) in closed-form, so it would significantly complicate the analysis in Stages I and II. Thus, we leave the general model with multiple trading periods as a future work.}} The users' decisions on choices of operators $\boldsymbol{n}$ are fixed within each subscription horizon. 
\end{itemize}

\subsection{Mobile Data Trading Platform}\label{sec:new2b}
\subsubsection{Users' Decisions}
For DTM user $i \in \mathcal{I}_1$, his \emph{trading decision} is defined as $\boldsymbol{x}_i=(a_i, \pi_i, q_i)$, which  consists of three components. First, user $i$ needs to decide his \emph{role} $a_i\in\{s,b\}$, i.e., whether to be a seller ($a_i=s$) or a buyer ($a_i=b$), or not participate in the market.
 Correspondingly, he has to determine his \emph{trading price} $\pi_{i}$ and his \emph{trading quantity} $q_{i}$. Specifically, if user $i$ chooses to be a seller, the price $\pi_i$ and the quantity $q_i$ refer to his selling price and selling quantity, respectively. On the other hand, if user $i$ chooses to be a buyer, the price $\pi_i$ and the quantity $q_i$ refer to his buying price and buying quantity, respectively.  \revjj{In the DTM, the proposed prices must be smaller than the usage-based price $\kappa$, otherwise no transaction will happen in the market.} Hence, we have $\pi_i \in \Pi = [0, \kappa]$ and $q_i \in [0, \infty)$. 
  Let $\bs{x}_i = (a_i,\pi_i,q_i)$ be the trading decision of DTM user $i \in \mc{I}_1$ and $\bs{x} = (\bs{x}_i, i \in \mc{I}_1)$ be all the DTM users' trading decisions.
  We define the set of feasible strategies of DTM user $i$'s trading decisions as 
\begin{equation}
  \mathcal{X}_i =\{(a_i,\pi_{i},q_i): a_i\in\{s,b\},\pi_i\in[0,\kappa],q_i\in[0,\infty)\}.
\end{equation}
%


\subsubsection{Sellers' and Buyers' Markets}\label{sec:2b1}
We consider a two-sided mobile data trading platform for the DTM users in set $\mathcal{I}_1$. It is based on the first-price multi-unit double auction mechanism \cite{r:stoll}, which consists of four main steps\footnote{For the details of the mechanism, please refer to Appendix \ref{app:a}.}:

\begin{itemize}
\item Step 1 (Bidding): At the beginning of every data trading horizon, all users make their trading decisions and submit their \emph{bids}, which include their roles $ \boldsymbol{a} = (a_i, \forall i \in \mathcal{I}_1)$, trading prices $ \boldsymbol{\pi} = (\pi_i, \forall i \in \mathcal{I}_1)$, and trading quantities $ \boldsymbol{q} = (q_i, \forall i \in \mathcal{I}_1)$, simultaneously to the platform. 

\item Step 2 (Prioritization): The platform sorts the bids of the users in the sellers' and buyers' markets, respectively, according to their proposed prices.
For the example in Fig.~\ref{fig:1}, in the table of mobile DTM in Stage III, the total demand in the buyers' market at the highest buying price of \$15 is 5 GB, which will be satisfied with the highest priority. The total supply in the sellers' market with the lowest selling price of \$13 is 10 GB, which will be sold with the highest priority.

\item Step 3 (Allocation): The platform clears the markets by allocating the bids through an auction mechanism with the priority orders, and outputs all users' transacted quantities $\boldsymbol{r}(\boldsymbol{x})=(r_i(\boldsymbol{x}),\forall i \in \mathcal{I}_1)$. Notice that the selling bids are only allocated to the buying bids whose buying price is no smaller than the corresponding selling price. For example, the platform first allocates the top row's bids (5 GB data quota supply with the selling price of \$13 to 5 GB data quota demand with the buying price of \$15), and then allocates the remaining 5 GB supply with the selling price of \$13 on the first row to 5 GB demand with the buying price of \$14 on the second row.  
The remaining 10 GB demand with the buying price of \$14, however, is not satisfied, because the remaining supply on the second row is with the selling price of \$15. If the 15 GB of demand with the buying price of \$14 is from multiple buyers, the buyers will \emph{equally share} the supplies. The details of the equally share allocation mechanism among multiple buyers are discussed in Algorithm \ref{algo:dtm} in Appendix \ref{app:a}.

\item Step 4 (Payment): Finally, the platform decides the payment of the users. If there is a gap between the selling and buying prices of the transacted data, then the price gap leads to the revenue of the operator. For example, the operator gains \$2 for allocating each GB with the selling price of \$13 to that with the buying price of \$15. The platform will also charge a seller $\theta$ dollars of operation fee for each GB sold.

\end{itemize}

At the end of every data trading horizon, all users can access the aggregate information on other users' decisions of the last data trading horizon, by checking the updated market information on the platform. We assume that by learning from the historical market information, all the users know the distribution of user types in the market.


\subsection{Outline of Analysis}\label{sec:new2c} 
  In the following three sections, we will use the \emph{backward induction} to solve the \rev{three-stage sequential game}.\footnote{\rev{Backward induction is the standard technique to analyze a sequential game, where we analyze the problem in the order of Stages III, II, and I. As a result, we discuss the analysis of the three stages in reverse order from Section \ref{sec:2b} to Section \ref{sec:4}.}} We start from the DTM users' mobile data trading game of Stages III in Section \ref{sec:2b}, and then study the users' operator selection game of Stage II in Section \ref{sec:3e}. Finally, we discuss the operator's operation fee optimization problem of Stage I in Section \ref{sec:4}. 

\section{Stage III: Mobile Data Trading Game}\label{sec:2b}
In this section, we study the users' mobile data trading game in Stage III, given the operator's operation fee $\theta$ and the users' operator selection $\boldsymbol{n}=(n_1,\ldots,n_I)$. We first introduce the DTM user model in Section \ref{sec:new32}. Next, we formulate the DTM users' mobile data trading game in Section \ref{sec:3a}, and analyze the DTM users' game equilibrium in Section \ref{sec:3d}. 

\subsection{\rev{Model: DTM User}}\label{sec:new32}
\subsubsection{Quota and Demand}\label{sec:2a1}

\revj{Let $Q_i$ be the \emph{monthly quota} of user $i$. For the ease of exposition, we assume that there are two possible realizations of a user $i$'s data demand for the month (or simply called \emph{demand}): $d_i\in\{d_{i,h}, d_{i,l}\}$, with $0<d_{i,l}<Q_i<d_{i,h}$.\footnote{The analysis for the case where both $d_{i,h}$ and $d_{i,l}$ are higher (or lower) than the monthly quota \revj{$Q_i$} is relatively trivial, and hence is omitted here due to space limitations.}\footnote{\rev{It is possible that there will be no trade when all the DTM users are sellers (i.e., when $d_{i,h} < Q_i, \forall i \in \mc{I}_1$) or buyers (i.e., when $d_{i,l} > Q_i, \forall i \in \mc{I}_1$). However, these extreme scenarios are not realistic in practice, as the operator has the freedom to adjust the quota $Q_i$ such that they will not happen. It is also possible that no trade happens when the DTM sets a very high operation fee $\theta$. However, in reality, the DTM operator can reduce operation fee to facilitate trading.}} The probability for user $i$ to observe a high demand $d_{i,h}$ is $p_i$, and the probability of observing a low demand $d_{i,l}$ is $1-p_i$.\footnote{\rev{In other words, each DTM user only knows his demand distribution, but not the actual demand, when making his trading decision.}} 
We assume that the distributions of $p_i$, $Q_i$, $d_{i,l}$, and $d_{i,h}$ are mutually independent. In addition, we assume that $Q_i$, $d_{i,l}$ and $d_{i,h}$ can follow any distributions with the mean values of $Q$, $D_l$ and $D_h$, respectively. 
}

\subsubsection{Satisfaction Loss}\label{sec:2a2}
Each user $i$ will incur a \emph{satisfaction loss} when his demand $d_i\in\{d_{i,h},d_{i,l}\}$ exceeds his monthly data quota $\revj{Q_i}$. We consider a linear satisfaction loss function
\begin{equation} \label{equ:satloss}
     L(\revj{Q_i}-d_i)=-\kappa[d_i-\revj{Q_i}]^+,
\end{equation}
where $[z]^+ = \max \{0,z\}$. Here, $[d_i - \revj{Q_i}]^+$ is the amount of insufficient data. When $d_i-\revj{Q_i}$ is positive, it means that the quota is exceeded. The linear coefficient $\kappa$ represents the overage usage fee imposed by the mobile operator in a two-part pricing tariff, where the user pays a fixed fee for the data consumption up to a monthly quota and a linear usage-based cost for any overage data consumption. Such a pricing model is widely used by major mobile operators. For example, for a 4G CMHK user, $\kappa = \$60$ with a monthly data quota of 1 GB. By selling or buying data in the market, a user can change his \emph{effective remaining data quota} (for the current month only), and hence will change his expected satisfaction loss.

\subsection{\rev{Problem Formulation: DTM Users' Non-cooperative Game}}\label{sec:3a}
First, we define a DTM user $i$'s payoff in Stage II in (\ref{eq:1}), which equals to his expected satisfaction loss minus the net payment of the trade and the switching cost.\footnote{Since we assume the data plan by all operators are with same quota and same subscription fee, it will not affect the users' decisions. Hence, we omit the subscription fee in the user's payoff function in (\ref{eq:stage2}) and (\ref{eq:1}).} 
For the seller case ($a_i=s$), the first term $(\pi_{i}-\theta)r_i((s,\pi_i,q_i),\boldsymbol{x_{-i}})$ is the revenue from selling data. The second and third terms correspond to the seller's expected satisfaction loss after selling $q_{i}$ data quota under the high and the low demand realization, respectively. Notice that each seller has to pay the operation fee of $\theta$ to the mobile operator for each unit of transacted data.\footnote{\rev{To benefit from reselling, a DTM user needs to first buy the data at a very low price and later resells it at a very high price, such that the difference between the selling price and the buying price is no less than the DTM’s operation fee $\theta$ (on sellers). This condition is not easy to satisfy in practice, so the reselling behavior in DTM is not common.}} 
In addition, if user $i$ is switching to another operator, he is going to pay an additional switching cost $e_i(n_i)$, where we will define in more details in (\ref{eq:epsilon}) in Section \ref{sec:2b} related to choice of operator. 
The buyer's payoff function on the last line of (\ref{eq:1}) is similar, except that the operator does not charge the buyer an operation fee.
\revjj{Notice that the payoff function in (\ref{eq:1}) is discontinuous. For example, if a seller's supply is only partially cleared, he can clear all his supply by decreasing his selling price by a unit $\epsilon$\footnote{We assume that $\epsilon$ is the smallest price unit used in the mobile data trading platform. For example, $\epsilon=1$ HKD in 2CM. }, and hence makes a discontinuous increase in his payoff.}

\begin{figure*}[ht]
\begin{equation}\label{eq:1}
\begin{split}
&U_{i,1}(r_i(\boldsymbol{x_i},\boldsymbol{x_{-i}}))=U_{i,1}\left(r_i\left((a_i,\pi_i,q_i),\boldsymbol{x_{-i}}\right)\right)\!\\
&=\!\left\{\!
    \begin{aligned}
    & (\pi_{i}\!-\!\theta) r_i((s,\pi_i,q_i),\boldsymbol{x_{-i}})\!+\!p_i L(\revj{Q_i}\!-\!r_i((s,\pi_i,q_i),\boldsymbol{x_{-i}})\!-\!d_{i,h})\!+\!(1\!-\!p_i)\!L(\revj{Q_i}\!-\!r_i((s,\pi_i,q_i),\boldsymbol{x_{-i}})\!-\!d_{i,l})\!-\!e_i(1),  \text{ if $a_i=s$},\\
	&-\pi_{i}r_i((b,\pi_i,q_i),\boldsymbol{x_{-i}})+\!p_i L(\revj{Q_i}\!+\!r_i((b,\pi_i,q_i),\boldsymbol{x_{-i}}))\!-\!d_{i,h})\!+\!(1\!-\!p_i)\!L(\revj{Q_i}\!+\!r_i((b,\pi_i,q_i),\boldsymbol{x_{-i}}))\!-\!d_{i,l})\!-\!e_i(1),	\text{ if $a_i=b$}.
    \end{aligned}
        \right.
\end{split}
\end{equation}
\hrulefill
\vspace{-1mm}
\end{figure*}

Next, we model the DTM users' interactions as the following non-cooperative game:
\begin{defn}
  A mobile data trading game is a tuple $\Omega = (\mathcal{I}_1, \mathcal{X}, \boldsymbol{U}_1)$ defined by
\begin{itemize}
    \item \emph{Players}: The set $\mathcal{I}_1$ of DTM users. 
    \item \emph{Strategies}: \revjj{Each player chooses an action (pure strategy) $\boldsymbol{x_i}=(a_i, \pi_i, q_i)\in\mathcal{X}_i$, which is his bid to the platform. }
    The strategy profile of all the players is $\boldsymbol{x} = (\boldsymbol{x}_i,\forall i \in\mathcal{I}_1)$ and the set of feasible strategy profile of all the players is $\mathcal{X} = \mathcal{X}_1 \times \ldots \times \mathcal{X}_{I_1}$.
    \item \emph{Payoffs}: The vector $\boldsymbol{U_1} = (U_{i,1}, \forall i \in \mathcal{I}_1)$ contains all users' payoffs as defined in (\ref{eq:1}).\footnote{\rev{We assume a non-cooperative game with complete information, where the users may derive each other's type distribution through learning.}}
\end{itemize}
\end{defn}

\subsection{\rev{Analysis: Nash Equilibrium}}\label{sec:3d}
We first investigate a user's best response in Section \ref{sec:3BR}, and then characterize the NE in Section \ref{sec:3NE}.

\subsubsection{Best Response Analysis}\label{sec:3BR}
 We first define a user's best response as the strategy that maximizes his payoff given the fixed strategies of other users.

\begin{defn}
User $i$'s best response is
\begin{equation}
\begin{split}
\boldsymbol{x}_i^{BR}(\boldsymbol{x}_{-i})&\triangleq(a_i^{BR}(\boldsymbol{x_{-i}}),\pi_i^{BR}(\boldsymbol{x_{-i}}),q_i^{BR}(\boldsymbol{x_{-i}}))\\
&=\arg\max\limits_{\boldsymbol{x_i}\in\mathcal{X}_i}U_{i,1}(r_i(\boldsymbol{x_{i}},\boldsymbol{x_{-i}})).
\end{split}
\end{equation}

\end{defn}

 To characterize the best response, we then define the \emph{transaction selling price} $\hat{\pi}_s$ and the \emph{transaction buying price} $\hat{\pi}_b$
for a given strategy profile $\boldsymbol{x} = (\boldsymbol{x}_i, \forall i\in\mathcal{I}_1)$ in Definitions 3 and 4, respectively. Here, we use $\epsilon>0$ to denote the smallest price unit. When a user makes a decision, he does not need to know the choice of each of the other users, but only needs to know the accumulated bids (e.g., the total demands and supplies in terms of GBs at each price). 
This allows us to analyze the users' best responses based on  $\hat{\pi}_s$ and $\hat{\pi}_b$ instead of $\boldsymbol{x}$.

\begin{figure}[t]
\centering
\includegraphics[width = 0.4\textwidth]{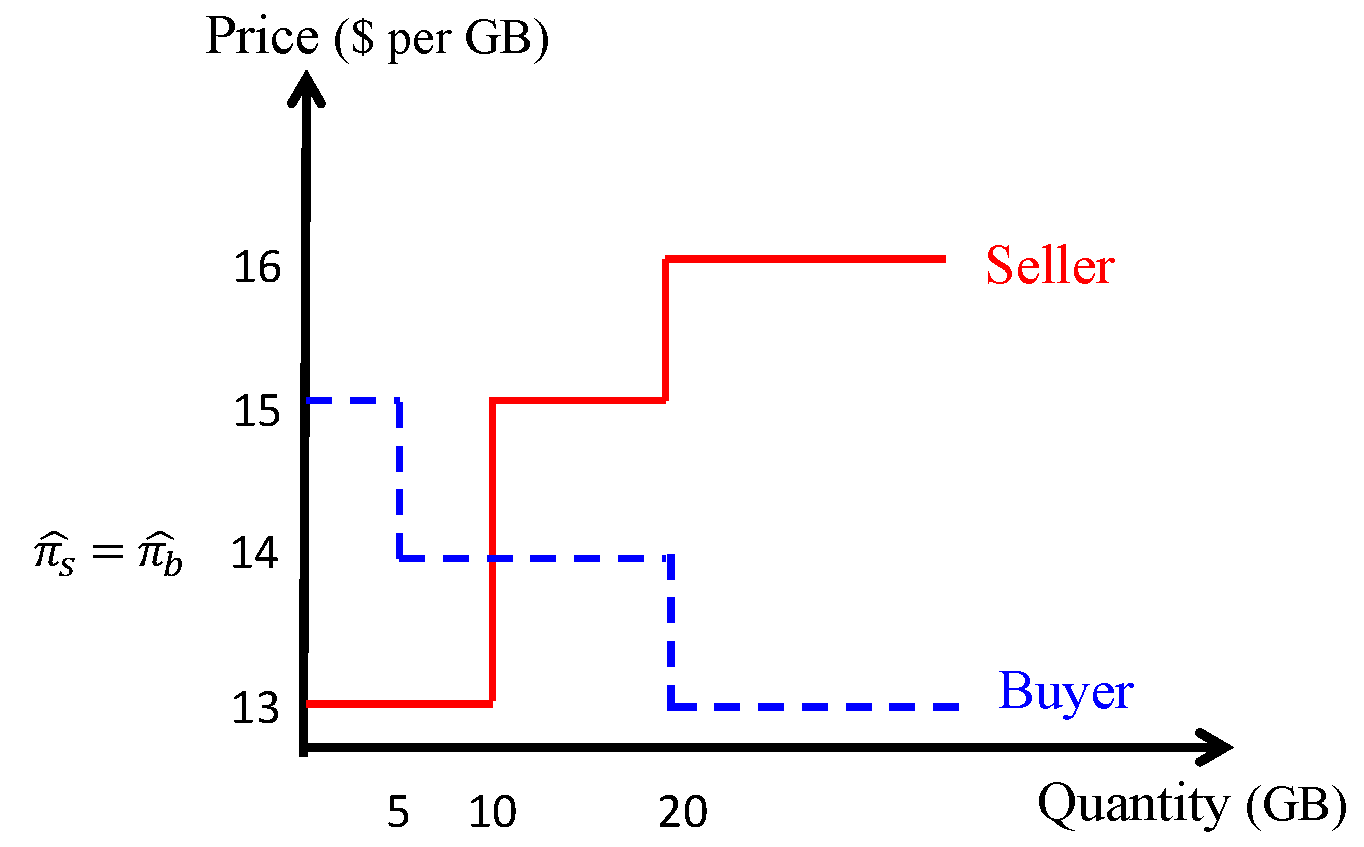}
\vspace{-3mm}
	  \caption{An example of $ \hat{\pi}_{s}$ and $ \hat{\pi}_{b}$.}\label{fig:2}
	  \vspace{-5mm}
\end{figure}

\begin{defn}
The \emph{transaction selling price}\footnote{Based on the definition, the sellers who propose this price can get \emph{some} or \emph{all} of their selling quantities transacted. The sellers who propose selling prices lower than this price will get \emph{all} their selling quantities transacted, because they have higher priorities. Notice that lower priority bids can only be cleared when all the higher priority bids are cleared.} $\hat{\pi}_{s}$ corresponds to the minimum price such that a seller cannot get any of his selling quantity transacted, if his proposed selling price is one unit larger than the transaction selling price:
\begin{equation}\label{eq:stp}
    \hat{\pi}_{s}\!=\!\min\{\pi_{i}\!:\!r_i((s,\pi_{i}+\epsilon,q_{i}),\boldsymbol{x_{-i}})=0, \forall i \in \mathcal{I}_1\}.
\end{equation}
\end{defn}
\begin{defn}
The \emph{transaction buying price} $\hat{\pi}_{b}$ corresponds to the maximum price such that a buyer cannot get any of his buying quantity transacted, if his proposed buying price is one unit smaller than the transaction buying price $\hat{\pi}_{b}$:
\begin{equation}\label{eq:btp}
    \hat{\pi}_{b}\!=\!\min\{\pi_{i}\!:\!r_i((b,\pi_{i}-\epsilon,q_{i}),\boldsymbol{x_{-i}})=0, \forall i  \in \mathcal{I}_1\}.
\end{equation}
\end{defn}

An example of the definitions is shown in Fig. \ref{fig:2}, based on the market supply and demand shown in Fig. \ref{fig:1}. The platform sorts user's bids according to their prices, then clears the bids with the highest priorities first. In this example, the transaction selling price $\hat{\pi}_s=14$ and the transaction buying price $\hat{\pi}_b=14$, because the users who propose a price of \$14 can get some quantity transacted, and the sellers who propose one unit higher (\$15) and the buyers who propose one unit lower (\$13) cannot get any quantity transacted.

In the following proposition, we characterize user $i$'s  best response given other users' strategies $\boldsymbol{x_{-i}}$. 

\begin{pps} \label{prop:br}
The best response $\boldsymbol{x}_i^{BR}(\boldsymbol{x}_{-i})$ of user $i$ with type $(p_i, d_{i,h}, d_{i,l})$ is given in (\ref{eq:thm2}).
\end{pps}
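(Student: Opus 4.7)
\textbf{Proof proposal for Proposition \ref{prop:br}.}
My plan is to fix user $i$'s candidate role $a_i \in \{s,b\}$, then for each role first solve for the optimal quantity $q_i$ as a function of the proposed price $\pi_i$, then optimize the price given the market state $(\hat{\pi}_s,\hat{\pi}_b)$ induced by $\boldsymbol{x}_{-i}$, and finally compare the two optimal role-specific payoffs to decide the role. Throughout, because the population is assumed infinite, I treat user $i$ as a price-taker whose own bid does not perturb $(\hat{\pi}_s,\hat{\pi}_b)$ in a first-order sense, so that $r_i$ is determined purely by whether the bid sits strictly above, strictly below, or exactly at the relevant transaction price.

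For the \emph{seller} case, I plan to exploit the piecewise-linear structure of $L(\cdot)$. With effective quota $Q_i - q_i$, the expected satisfaction loss has two linear regimes: for $q_i \in [0,Q_i - d_{i,l}]$ only the high-demand realization contributes to overage, giving marginal cost $p_i\kappa$, while for $q_i > Q_i - d_{i,l}$ both realizations contribute, giving marginal cost $\kappa$. Combined with the constant marginal revenue $\pi_i - \theta$, the optimal quantity conditional on full clearance is therefore a threshold selection: $q_i^*(\pi_i) = 0$ if $\pi_i - \theta < p_i \kappa$; $q_i^*(\pi_i) = Q_i - d_{i,l}$ if $p_i \kappa \le \pi_i - \theta < \kappa$; and $q_i^*(\pi_i) \to \infty$ if $\pi_i - \theta > \kappa$ (the last case is ruled out by $\pi_i \le \kappa$). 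An entirely analogous computation for the \emph{buyer} case, working with effective quota $Q_i + q_i$ and marginal cost $\pi_i$, yields a symmetric threshold structure whose breakpoints are $p_i\kappa$ and $\kappa$ on the axis of the buying price.

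Next, I would optimize over $\pi_i$. Since the seller's revenue is increasing in $\pi_i$ while clearance fails above $\hat{\pi}_s$ and succeeds (at least partially) at and below $\hat{\pi}_s$, the candidate prices collapse to $\pi_i = \hat{\pi}_s$ (full price, potentially partial transaction) and $\pi_i = \hat{\pi}_s - \epsilon$ (one unit below, full transaction). A symmetric pair of candidates $\{\hat{\pi}_b, \hat{\pi}_b + \epsilon\}$ governs the buyer. I then plug the optimal $q_i^*$ from the previous step into (\ref{eq:1}) at each candidate price and compare; this produces closed-form seller- and buyer-optimal payoffs as piecewise-linear functions of $p_i$. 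Comparing these two optimal payoffs yields two threshold functions of the prevailing price, namely $P_L(\pi)$ below which the user strictly prefers to sell and $P_H(\pi)$ above which he strictly prefers to buy, with no trade in between; these are exactly the thresholds tabulated in Table \ref{table:variable}.

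The step I expect to be the hardest is handling the discontinuity of $U_{i,1}$ at $\pi_i = \hat{\pi}_s$ and $\pi_i = \hat{\pi}_b$, as emphasized after (\ref{eq:1}): because the equal-share rule in Algorithm \ref{algo:dtm} only partially clears bids exactly at the transaction price, the payoff jumps by a non-vanishing amount as $\pi_i$ crosses $\hat{\pi}_s$ from below. Hence the argmax need not exist unless I restrict bids to the $\epsilon$-grid and separate two regimes: (i) when the marginal benefit of the extra $\epsilon$ per unit exceeds the expected loss from being only partially cleared, the user bids at $\hat{\pi}_s$; (ii) otherwise he bids at $\hat{\pi}_s - \epsilon$. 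I would carry this case analysis out carefully, verify that in the limit $\epsilon \to 0$ the two regimes coincide up to $O(\epsilon)$, and then read off the compact expression in (\ref{eq:thm2}) as the resulting best response, with the switching-cost term $e_i(1)$ entering only as an additive constant that does not affect the argmax within Stage III.
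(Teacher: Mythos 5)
Your proposal follows essentially the same route as the paper's proof in Appendix \ref{app:b}: first use the piecewise linearity of $L(\cdot)$ to show the optimal quantity is $Q_i-d_{i,l}$ (seller), $d_{i,h}-Q_i$ (buyer), or zero depending on whether $p_i$ lies below $(\pi_i-\theta)/\kappa$ or above $\pi_i/\kappa$, then collapse the price choice to the candidates $\{\hat{\pi}_s-\epsilon,\hat{\pi}_s\}$ and $\{\hat{\pi}_b,\hat{\pi}_b+\epsilon\}$ and resolve the discontinuity via the equal-share clearance rule, exactly as the paper does. The only cosmetic difference is that you phrase the $\hat{\pi}_s$ versus $\hat{\pi}_s-\epsilon$ split as a marginal $\epsilon$-gain versus partial-clearance-loss comparison, while the paper (and the conditions in (\ref{eq:thm2})) state it directly as whether the equal-share allocation at the transaction price fully clears the user's quantity; for the smallest price unit $\epsilon$ these coincide, so the argument is sound.
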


\begin{figure*}[htp]
\begin{align}\label{eq:thm2}
     &\boldsymbol{x}_i^{BR}(\boldsymbol{x}_{-i})\notag\\
     &\left\{
    \begin{aligned}
    &\!\!\!\!=\!\!(\!s,\hat{\pi}_s,\revj{Q_i}\!\!-\!d_{i,l}\!), ~~~~~\text{if}~  p_i \!\leq\! \frac{\hat{\pi}_s \!\!-\! \theta}{\kappa} ~~~~~~\text{ and }~\frac{\sum_{j\in\mathcal{HB}_i}\!q_j\!\!-\!\!\sum_{j\in\mathcal{LS}_i}\!q_j\!\!-\!\!\sum_{j\in\{k:k\in\mathcal{E}_i,q_k<\revj{Q_i}-d_{i,l}\}}\!q_j}{|\{k:k\in\mathcal{E}_i,q_k\geq \revj{Q_i}-d_{i,l}\}|}\!\geq\! \revj{Q_i}\!-\!d_{i,l},\\
    &\!\!\!\!=\!\!(\!s,\hat{\pi}_s\!-\!\epsilon,\revj{Q_i}\!\!-\!d_{i,l}\!), ~\text{if}~  p_i \!\leq\! \frac{\hat{\pi}_s \!\!-\!\epsilon\!-\! \theta}{\kappa} ~\text{ and }~\frac{\sum_{j\in\mathcal{HB}_i}\!q_j\!\!-\!\!\sum_{j\in\mathcal{LS}_i}\!q_j\!\!-\!\!\sum_{j\in\{k:k\in\mathcal{E}_i,q_k<\revj{Q_i}-d_{i,l}\}}\!q_j}{|\{k:k\in\mathcal{E}_i,q_k\geq \revj{Q_i}-d_{i,l}\}|}\!<\! \revj{Q_i}\!-\!d_{i,l},\\ 
    &\!\!\!\!=\!\!(\!b,\hat{\pi}_{b},d_{i,h}\!\!-\!\revj{Q_i}\!), ~~~~~\text{if}~ p_i\!\geq\!\frac{\hat{\pi}_{b}}{\kappa} ~~~~~~~~ \text{ and } ~~ \frac{\sum_{j\in\mathcal{LS}_i}\!q_j\!\!-\!\!\sum_{j\in\mathcal{HB}_i}\!q_j\!\!-\!\!\sum_{j\in\{k:k\in\mathcal{E}_i,q_k<d_{i,h}-\revj{Q_i}\}}\!q_j}{|\{k:k\in\mathcal{E}_i,q_k\geq d_{i,h}-\revj{Q_i}\}|}\!<\! d_{i,h}\!-\!\revj{Q_i},\\        &\!\!\!\!=\!\!(\!b,\hat{\pi}_{b}\!\!+\!\epsilon,d_{i,h}\!\!-\!\revj{Q_i}\!), ~~\text{if}~ p_i\!\geq\!\frac{\hat{\pi}_{b}\!\!+\!\epsilon}{\kappa} ~~~~~ \text{ and } ~ \frac{\sum_{j\in\mathcal{LS}_i}\!q_j\!\!-\!\!\sum_{j\in\mathcal{HB}_i}\!q_j\!\!-\!\!\sum_{j\in\{k:k\in\mathcal{E}_i,q_k<d_{i,h}-\revj{Q_i}\}}\!q_j}{|\{k:k\in\mathcal{E}_i,q_k\geq d_{i,h}-\revj{Q_i}\}|}\!\geq\! d_{i,h}\!\!-\!\revj{Q_i},\\ 
		    &\!\!\rev{=(s,0,0) \text{ or } (b,0,0)}, ~~\text{otherwise}.    
    \end{aligned}
    \right.
\end{align}\label{thm:2}
\hrulefill
\vspace{-1mm}
\end{figure*}

\vspace{-2mm}

The proof of Proposition \ref{prop:br} is given in Appendix \ref{app:b}. 

First, Proposition \ref{prop:br} states that every user who wants to trade will either choose to sell the quota at the quantity $\revj{Q_i}-d_{i,l}$ or choose to buy at the quantity $d_{i,h}-\revj{Q_i}$ if they choose to join DTM, due to the piecewise linearity of (\ref{eq:1}) in quantity $q_i$.

Next, we discuss the five lines of equation (\ref{eq:thm2}) in details:
\begin{itemize}
    \item The first and third lines: According to the transaction functions in (\ref{eq:realization1}) and (\ref{eq:realization2}), if the supply is not enough, the users with the same price will equally share the supply. In this case, the bids of users with a low $q_i$ will be fully satisfied\footnote{By saying ``the bid of a user is satisfied'', we mean that ``his demand is satisfied'' if the user is a buyer, or ``his supply is cleared'' if the user is a seller.}, while the bids of users with a high $q_i$ will only be partially satisfied. Hence, the users with a low $p_i$ and a low $\revj{Q_i}-d_{i,l}$ will choose to sell with a lower price $\hat{\pi}_s$ (as shown in the first line of (\ref{eq:thm2})), and the users with a high $p_i$ and a low $d_{i,h}-\revj{Q_i}$ will choose to buy with a higher price $\hat{\pi}_b$ (as shown in the third line of (\ref{eq:thm2})).
    \item The second and fourth lines: To make their bids fully satisfied, the users with a higher quantity can propose a price with a slightly higher priority. Hence, the users with a low $p_i$ and a high $\revj{Q_i}-d_{i,l}$ will choose to sell with a lower price $\hat{\pi}_s-\epsilon$ (as shown in the second line of (\ref{eq:thm2})), and the users with a high $p_i$ and a high $d_{i,h}-\revj{Q_i}$ will choose to buy with a higher price $\hat{\pi}_b+\epsilon$ (as shown in the fourth line of (\ref{eq:thm2})).
    \item The fifth line: \rev{A user who is not willing to participate will propose a zero price and quantity as a seller or a buyer.} 
\end{itemize}



\subsubsection{Unique NE Characterization}\label{sec:3NE}
 Next, we define the NE as the intersection of all the users' best response correspondences.
\begin{defn}(Nash Equilibrium (NE)):
A strategy profile $\boldsymbol{x}^*$ is an NE if and only if
\begin{align}\label{eq:def4}
    U_{i,1}(r_i(\boldsymbol{x}_i^*,\boldsymbol{x_{-i}^*}))&\geq U_{i,1}(r_i(\boldsymbol{x_i'},\boldsymbol{x_{-i}^*})), \forall \boldsymbol{x_i'}\in \mathcal{X}_i, i\in\mathcal{I}_1.
\end{align}
\end{defn}

If a strategy profile is a NE, none of the users has the incentive to change his strategy, and the transacted prices will not change.  
To obtain the NE, we first show the conditions that any NE should satisfy in Lemma \ref{lem:1}. 
\begin{lem}\label{lem:1}
At any NE $\boldsymbol{x}^*$, there exists a unique transaction price $\hat{\pi}(\boldsymbol{n},\theta)$ such that for any fixed $\boldsymbol{n}$ and $\theta$, the following conditions are satisfied.
\begin{align}
&\sum_{i\in\{j: a_j^*=s,\pi_j^*\leq\hat{\pi}(\boldsymbol{n},\theta)\}}q_i=\sum_{i\in\{j: a_j^*=b,\pi_j^*\geq\hat{\pi}(\boldsymbol{n},\theta)\}}q_i,\label{eq:lem11}
\\
&\pi_i^{*}=\hat{\pi}(\boldsymbol{n},\theta), \forall i\in\{j\in\mathcal{I}_1:p_j<\frac{\hat{\pi}(\boldsymbol{n},\theta)-\theta}{\kappa} \notag\\
&\quad\quad\quad\quad\quad\quad\quad\quad\quad\quad\quad\quad\quad\quad\text{ or } p_j>\frac{\hat{\pi}(\boldsymbol{n},\theta)}{\kappa}\}.\label{eq:lem12}
\end{align}
\end{lem}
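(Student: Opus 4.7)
The plan is to use the best response characterization in Proposition~\ref{prop:br} to pin down the structure that any NE must have, and then derive the two conditions by contradiction arguments on small price deviations. Define $\hat{\pi}(\boldsymbol{n},\theta)$ as the common value of the transaction selling price $\hat{\pi}_s$ and the transaction buying price $\hat{\pi}_b$ at equilibrium (which I will show must coincide). The overall strategy is: first show $\hat{\pi}_s = \hat{\pi}_b$, then show (\ref{eq:lem12}) by inspecting the four non-trivial branches of (\ref{eq:thm2}), then show (\ref{eq:lem11}) from market-rationing arguments, and finally show uniqueness of $\hat{\pi}$.

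First I would argue that at any NE one must have $\hat{\pi}_s = \hat{\pi}_b$ (up to the discretization unit $\epsilon$). Suppose instead that $\hat{\pi}_b > \hat{\pi}_s + \epsilon$. Then by Definition~4 a buyer who posts the price $\hat{\pi}_b$ and whose buying quantity is at least partially cleared could deviate to any intermediate price $\hat{\pi}'$ with $\hat{\pi}_s < \hat{\pi}' < \hat{\pi}_b$. Such a deviation still ensures the buyer's bid is fully cleared (since all selling bids with price at most $\hat{\pi}_s$ have higher priority than any buying price above $\hat{\pi}_s$), while strictly reducing his payment per unit, which contradicts the NE property \eqref{eq:def4}. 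The symmetric argument applies to a seller when $\hat{\pi}_s < \hat{\pi}_b - \epsilon$. In the large-population limit $I \to \infty$, such single-user deviations do not change $\hat{\pi}_s$ or $\hat{\pi}_b$, so the argument is consistent. Hence $\hat{\pi}_s = \hat{\pi}_b =: \hat{\pi}(\boldsymbol{n},\theta)$.

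Next, to establish (\ref{eq:lem12}), I would substitute $\hat{\pi}_s = \hat{\pi}_b = \hat{\pi}$ into the four active branches of (\ref{eq:thm2}). Branches one and two characterize sellers, both of whom propose a price within $\epsilon$ of $\hat{\pi}$; branches three and four characterize buyers, both within $\epsilon$ of $\hat{\pi}$. For any user with $p_j < (\hat{\pi}-\theta)/\kappa$, the strict inequality ensures that as $\epsilon \to 0$ the best-response selling price converges to exactly $\hat{\pi}$, and likewise for $p_j > \hat{\pi}/\kappa$ on the buying side. (The users with $p_j$ exactly at the boundary may be indifferent, but the statement restricts attention to the strict-inequality regions.) This yields (\ref{eq:lem12}).

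For (\ref{eq:lem11}), I would suppose the total supply proposed at prices at most $\hat{\pi}$ strictly exceeds the total demand at prices at least $\hat{\pi}$. Then some seller posting at $\hat{\pi}$ gets only partially cleared by the allocation rule in Algorithm~\ref{algo:dtm}; by the first branch of (\ref{eq:thm2}), this seller would deviate to $\hat{\pi}-\epsilon$ (the second branch), contradicting the NE. The reverse imbalance is handled symmetrically with a buyer deviating to $\hat{\pi}+\epsilon$. Thus supply and demand at the transaction price must balance, giving (\ref{eq:lem11}). Uniqueness of $\hat{\pi}(\boldsymbol{n},\theta)$ then follows because the supply side of (\ref{eq:lem11}) is non-decreasing in $\hat{\pi}$ (more user types $p_j$ qualify as sellers as $\hat{\pi}$ grows) while the demand side is non-increasing, so the balance condition admits at most one solution.

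The main obstacle I anticipate is the discontinuity of the payoff in (\ref{eq:1}) together with the $\epsilon$-discretization: deviation arguments must be phrased carefully so that a buyer undercutting by $\epsilon$ or a seller underbidding by $\epsilon$ actually produces a strictly improving, feasible bid under the tie-breaking in Algorithm~\ref{algo:dtm}, and so that these single-user deviations do not themselves displace $\hat{\pi}_s$ or $\hat{\pi}_b$. I would handle this by invoking the large-population assumption from Section~\ref{sec:new2a} so that each user's bid is atomic relative to the aggregate, and by taking the limit $\epsilon \to 0$ only after the deviation comparisons have been established.
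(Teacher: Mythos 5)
Your overall strategy (deviation arguments built on Proposition~\ref{prop:br} and the allocation rule, then aggregate balance and monotonicity for uniqueness) is the same engine the paper uses, but two of your steps do not quite close, and both trace to the same missing ingredient. In Step 1, the claim that a buyer posting $\hat{\pi}_b$ can deviate to an intermediate price $\hat{\pi}'\in(\hat{\pi}_s,\hat{\pi}_b)$ and ``still ensure his bid is fully cleared'' is only valid if he was \emph{fully} cleared to begin with. If he is rationed at $\hat{\pi}_b$ (the equal-sharing case in Algorithm~\ref{algo:dtm}), lowering his bid hands strict priority to the other buyers at $\hat{\pi}_b$, who absorb the scarce residual supply first; the deviator can end up with strictly less quantity, and since a buyer with $p_i>\hat{\pi}_b/\kappa$ values the quantity at more than the price, the deviation need not improve his payoff. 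So the case $\hat{\pi}_b>\hat{\pi}_s+\epsilon$ with rationed marginal buyers is not excluded by your argument. Relatedly, in Step 2 you only obtain $\pi_i^*\in\{\hat{\pi}-\epsilon,\hat{\pi}\}$ for sellers and $\{\hat{\pi},\hat{\pi}+\epsilon\}$ for buyers (branches two and four of \eqref{eq:thm2}) and then appeal to $\epsilon\to 0$, whereas \eqref{eq:lem12} asserts exact equality of all traders' prices for the given price grid.

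The missing idea is the paper's intermediate result (Lemma~\ref{lem:pps21}): at any NE, any traders posting a common price must be \emph{fully} cleared, because a rationed seller (buyer) would undercut (overbid) by $\epsilon$, jump the equal-sharing queue, and strictly gain since $p_i\kappa<\pi_i^*-\theta$ (resp.\ $p_i\kappa>\pi_i^*$). You actually invoke exactly this undercutting argument in Step 3 to get \eqref{eq:lem11}, but you need it \emph{first}: once no trader is rationed at NE, the rationing conditions that activate branches two and four of \eqref{eq:thm2} are vacuous, so every trader posts exactly the common price (giving \eqref{eq:lem12} without a limit), the partially-cleared sub-case that breaks your Step 1 cannot arise, and different price levels among traders are eliminated by the raise-your-price argument of Lemma~\ref{lem:pps22} (a fully cleared seller strictly below another trader's price gains by asking $\epsilon$ more). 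With that reordering — no rationing first, then price equality, then aggregate balance — your proof matches the paper's; as written, Steps 1 and 2 contain a genuine gap relative to the exact statement of \eqref{eq:lem12}.
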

Equation (\ref{eq:lem11}) implies that there exists an equilibrium price $\hat{\pi}(\boldsymbol{n},\theta)$ such that the market is \emph{cleared}, where the total supply matches the total demand.
From (\ref{eq:lem12}), we observe that those who want to make a transaction in the market will propose the \emph{same price} in their best responses.\footnote{The users with type $p_i<\frac{\hat{\pi}(\boldsymbol{n},\theta)-\theta}{\kappa}$ are sellers and the users with type $p_i>\frac{\hat{\pi}(\boldsymbol{n},\theta)}{\kappa}$ are buyers. The sellers' transaction selling price equals to the buyers' transaction buying price.} The proof of Lemma \ref{lem:1} is in Appendix \ref{app:c}.

By jointly solving equations (\ref{eq:lem11}) and (\ref{eq:lem12}) in Lemma \ref{lem:1}, we obtain the unique NE in Theorem 1.
For notation convenience, we first define 
\begin{align}
P_L(\pi) \triangleq \frac{\pi-\theta}{\kappa}, \label{eq:a}
\end{align}
and
\begin{align}
P_H(\pi) \triangleq \frac{\pi}{\kappa},
\end{align}
\noindent which are the boundary indicators of user types. With these indicators, we can divide the users into different groups according to their types.  
\begin{thm}\label{co:1}
The unique NE $\boldsymbol{x}^*$ of game $\Omega$ is shown in Table \ref{table:co1}, where the equilibrium price $\hat{\pi}(\boldsymbol{n},\theta)$ is the unique solution of the following equation:
\begin{equation}\label{eq:eqp}
\sum_{i\in\{j: p_j\leq P_L(\hat{\pi}(\boldsymbol{n},\theta))\}}(\revj{Q_i}-d_{i,l})=\sum_{i\in\{j: p_j\geq P_H(\hat{\pi}(\boldsymbol{n},\theta))\}}(d_{i,h}-\revj{Q_i}).
\end{equation}
\end{thm}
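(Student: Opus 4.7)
The plan is to obtain Theorem~1 by combining the necessary conditions established in Lemma~\ref{lem:1} with the best‑response characterization in Proposition~\ref{prop:br}, and then to argue existence and uniqueness of the price that jointly satisfies these conditions. Lemma~\ref{lem:1} already tells us that at any NE there is a single transaction price $\hat{\pi}(\boldsymbol{n},\theta)$, every active trader posts exactly this price, and the market clears. What remains is to (i) identify who trades and how much at this price, (ii) turn the clearing condition \eqref{eq:lem11} into the closed‑form equation \eqref{eq:eqp}, (iii) show that \eqref{eq:eqp} has a unique solution, and (iv) verify that the resulting profile in Table~\ref{table:co1} is indeed an NE.

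First, I would classify users by their type $p_i$ using the thresholds $P_L(\hat{\pi}) = (\hat{\pi}-\theta)/\kappa$ and $P_H(\hat{\pi}) = \hat{\pi}/\kappa$ defined in \eqref{eq:a}. Applying Proposition~\ref{prop:br} with the common transaction price $\hat{\pi}$, a user with $p_i \leq P_L(\hat{\pi})$ finds it optimal to sell quantity $Q_i - d_{i,l}$, a user with $p_i \geq P_H(\hat{\pi})$ finds it optimal to buy quantity $d_{i,h} - Q_i$, and a user with $p_i \in (P_L(\hat{\pi}), P_H(\hat{\pi}))$ is indifferent enough that his expected marginal payoff from either side is non‑positive, so he remains inactive. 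Substituting these individual supplies and demands into the clearing identity \eqref{eq:lem11} eliminates the bids $\boldsymbol{x}^*$ from the equation and yields exactly \eqref{eq:eqp}, giving one scalar equation in the single unknown $\hat{\pi}(\boldsymbol{n},\theta)$.

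Next, I would establish existence and uniqueness of the solution of \eqref{eq:eqp}. Define $S(\pi)$ as the left‑hand side and $B(\pi)$ as the right‑hand side of \eqref{eq:eqp}. As $\pi$ increases, the threshold $P_L(\pi)$ increases so the set of sellers $\{j: p_j \leq P_L(\pi)\}$ grows, making $S(\pi)$ non‑decreasing; symmetrically, $P_H(\pi)$ increases so the set of buyers $\{j: p_j \geq P_H(\pi)\}$ shrinks, making $B(\pi)$ non‑increasing. Under the large‑population assumption ($I\rightarrow\infty$) the two quantities become continuous (and in fact strictly monotone on the relevant range, since $p_i$ is drawn from a continuous distribution), so $S(\pi) - B(\pi)$ is strictly increasing and continuous on $[\theta,\kappa]$. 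Evaluating at the endpoints, $S(\theta) = 0 < B(\theta)$ because at $\pi=\theta$ no user with strictly positive $p_i$ sells, while $S(\kappa) > 0 = B(\kappa)$ because at $\pi=\kappa$ no user buys. By the intermediate value theorem a unique $\hat{\pi}(\boldsymbol{n},\theta) \in (\theta,\kappa)$ satisfies \eqref{eq:eqp}.

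Finally, I would close the loop by verifying that the candidate profile described in Table~\ref{table:co1} (each seller‑type posting $(s,\hat{\pi},Q_i-d_{i,l})$, each buyer‑type posting $(b,\hat{\pi},d_{i,h}-Q_i)$, and all others staying out) is in fact an NE in the sense of Definition~4. Because \eqref{eq:eqp} guarantees aggregate supply equals aggregate demand at price $\hat{\pi}$, every active bid is cleared in full, so by Proposition~\ref{prop:br} no active user can strictly improve by shading the price up or down, nor by altering the quantity (the payoff is piecewise linear in $q_i$ with the kink at $Q_i - d_{i,l}$ or $d_{i,h} - Q_i$). Inactive users, whose $p_i$ lies in $(P_L(\hat{\pi}), P_H(\hat{\pi}))$, gain nothing by entering either side at price $\hat{\pi}$. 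The main obstacle I anticipate is the discontinuity of $U_{i,1}$ highlighted earlier: I must rule out the deviation of undercutting by $\epsilon$ to grab a larger share when the market is tight. This is handled precisely by the tie‑breaking condition appearing in the first versus second (and third versus fourth) lines of \eqref{eq:thm2}, which, together with market clearing at $\hat{\pi}$, ensures that no $\epsilon$‑deviation yields a higher expected payoff in the limit $I\rightarrow\infty$.
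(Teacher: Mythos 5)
Your proposal is correct and follows essentially the same route as the paper's own proof: combine the necessary conditions of Lemma~\ref{lem:1} with the best-response characterization of Proposition~\ref{prop:br} to pin down roles, quantities, and the common price, and then read off the market-clearing equation \eqref{eq:eqp}. The extra steps you include (the monotonicity/intermediate-value argument for uniqueness of $\hat{\pi}(\boldsymbol{n},\theta)$, which the paper only spells out later in the proof of Lemma~\ref{lem:stage1}, and the explicit verification that the profile in Table~\ref{table:co1} is an NE, including ruling out $\epsilon$-undercutting) are sound refinements rather than a different approach.
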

Theorem \ref{co:1} states that at the NE, all the users who participate in the market will propose the \emph{same price}, and all their proposed quantities will be transacted. 
This is because the sellers proposing higher prices cannot get their selling quantities transacted, and the sellers proposing lower prices receive lower payoffs. 
The same intuition applies to the buyers. 
The users who are not willing to participate \rev{will propose a zero price and quantity}, and will not affect the market trading results. 
The proof of Theorem \ref{co:1} is given in Appendix \ref{app:d}. 

\begin{table}[t] 
\centering 
\caption{The equilibrium users' decisions on roles, quantities, and prices}
\begin{tabular}{|@{ }c@{ }|@{ }c@{ }|@{ }c@{ }|@{ }c@{ }|}
\hline
\textbf{User Type $p_i$} & \textbf{Role} $a_i^{*}$ & \textbf{Quantity} $q_i^{*}$ & \textbf{Price} $\pi_i^{*}$ \\
\hline
$0<p_i \leq P_L(\hat{\pi}(\boldsymbol{n},\theta))$ & $s$ & $\revj{Q_i}-d_{i,l}$ & $\hat{\pi}(\boldsymbol{n},\theta)$\\
\hline
 $P_L(\hat{\pi}(\boldsymbol{n},\theta))<p_i$&  \multirow{2}{*}{\text{$s$ or $b$}}  & \multirow{2}{*}{\rev{$0$}} & \multirow{2}{*}{\rev{$0$}}\\
$<P_H(\hat{\pi}(\boldsymbol{n},\theta))$ & & & \\
\hline
$\;\; P_H(\hat{\pi}(\boldsymbol{n},\theta)) \leq p_i<1$ & $b$ & $d_{i,h}-\revj{Q_i}$ & $\hat{\pi}(\boldsymbol{n},\theta)$\\
\hline
\end{tabular}
 \label{table:co1}
 \vspace{-3mm}
\end{table}

\section{Stage II: Operator Selection Game}\label{sec:3e}
In this section, we study the users' operator selection game in Stage II given the operation fee $\theta$. We introduce the user model in Section \ref{sec:user2}, and formulate the operator selection game in Section \ref{sec:3ea}. Finally, we analyze the NE in Section \ref{sec:3ec}. 

\subsection{\rev{Model: User's Operator Choice}}\label{sec:user2}
Let $o_i$ be user $i$'s \emph{original} choice of operator in the previous subscription horizon.
If user $i$ is a subscriber of DTM operator (i.e., user $i$ is a DTM user), we denote it as $o_i=1$. Otherwise, user $i$ is a subscriber of other operators, and we denote it as $o_i=0$. Thus, DTM's original market share is 
\begin{equation}
    \alpha \triangleq \frac{|\{i:o_i=1\}|}{I}.
\end{equation}

At the beginning of a new subscription horizon, user $i$ needs to decide an \emph{updated choice of operator} $n_i$, where $n_i\in\{0,1\}$. Notice that a user $i$ will incur a \emph{switching cost} $e_i(n_i)$ for the following subscription horizon linear to his expected usage\footnote{The switching cost can be interpreted as the loss of the ``repeat purchase welfare'' \cite{r:switch}\cite{r:duan}. According to \cite{r:length}, the users who repeatedly purchase his data plan can enjoy a better service (e.g., 3G service upgraded to 4G service, or a higher Internet speed due to the technology development) with the service price unchanged. A user with a higher usage can benefit more from the ``repeat purchase welfare''. However, if the user switch to the other operator, he can no longer enjoy the ``repeat purchase welfare''.} if he choose to switch an operator (i.e., $n_i\neq o_i$). Specifically, we have
\begin{equation}\label{eq:epsilon}
    e_i(n_i)= \left\{
    \begin{aligned}
    &e[p_id_{i,h}+(1-p_i)d_{i,l}], \quad \text{ if } n_i \neq o_i,\\
    &0,\quad \quad \quad\quad \quad \quad\quad \quad \quad~\quad\text{ if } n_i = o_i, \\
    \end{aligned}
    \right.
\end{equation}
where $e \geq 0$ is the switching cost parameter.

\subsection{\rev{Problem Formulation: Users' Operator Selection Game}}\label{sec:3ea}
We first define user $i$'s payoff function if he chooses the other operators (i.e., $n_i=0$) in Stage II. In this case, since no DTM is available in the other operators, his payoff equals to his expected satisfaction loss minus the switching cost:
\begin{align}\label{eq:stage2}
U_{i,0}=p_i L(\revj{Q_i}\!-\!d_{i,h})+\!(1\!-\!p_i)L(\revj{Q_i}\!-\!d_{i,l})-e_i(0).
\end{align}

Then, by combining with the case of $n_i=1$ (which was discussed in Section \ref{sec:2b} for the payoff $U_{i,1}(r_i(\boldsymbol{x_i},\boldsymbol{x_{-i}}))$), we can see that user $i$'s payoff $U_i(n_i,\boldsymbol{n}_{-i})$ depends on the other users' decisions $\boldsymbol{n_{-i}}=(n_j, \forall j\in\mathcal{I},j\neq i)$. 
  As shown in Fig. 2, each user makes operator selection decision to maximize his utility in one subscription horizon (i.e., the summation of utilities of $T$ data trading horizons). 
	By assuming that a user's demand statistics (as discussed in Section \ref{sec:2a1}) remain the same for each month, the utility function for user $i$ at Stage II over the $T$ trading horizons can be written as	
\begin{equation}\label{eq:u}
\begin{split}
U_i(n_i,\boldsymbol{n}_{-i})=
\!\left\{
    \begin{aligned}
    &T\cdot U_{i,0},\quad\quad\quad\quad\quad\quad\quad\text{if}~~ n_i=0, \\
    &T\cdot U_{i,1}(r_i(\boldsymbol{x_i},\boldsymbol{x_{-i}})),\quad\,\text{if}~~ n_i=1.
    \end{aligned}
        \right.
\end{split}
\end{equation}

Hence, we model the interactions among DTM users as the following non-cooperative game:
\begin{defn}
  An operator selection game is a tuple $\Gamma = (\mathcal{I}, \mathcal{N}, \boldsymbol{U})$ defined by
\begin{itemize}
    \item \emph{Players}: The set $\mathcal{I}$ of users, where user $i \in \mathcal{I}$ is associated with a type $(o_i,p_i,d_{i,h},d_{i,l})$.
    \item \emph{Strategies}: Each player chooses an action (pure strategy) $n_i$ as his choice of operator. 
    The strategy profile of all the players is $\boldsymbol{n} = (n_i,\forall i \in\mathcal{I})$, where $n_i\in\{0,1\}$.
    \item \emph{Payoffs}: The vector $\boldsymbol{U} = (U_{i}, \forall i \in \mathcal{I})$ contains all users' payoffs as defined in (\ref{eq:u}).
\end{itemize}
\end{defn}

\subsection{\rev{Analysis: Nash Equilibrium}}\label{sec:3ec}
We first investigate a user's best response in Section \ref{sec:4BR}, and then characterize the NE in Section \ref{sec:4NE}.

\subsubsection{Best Response Analysis}\label{sec:4BR}
 We assume that all the operators are the same, except that the DTM operator has an additional DTM, so a DTM user will not switch to other operators. In other words, $n_i=1$ is a dominant strategy for the players with $o_i=1$.
For user $i$ with $o_i=0$, he will switch to a DTM operator if his benefit from trade is larger than the switching cost. Hence, we characterize user $i$'s best response in the following proposition.
\begin{pps}\label{thm:br2}
The best response $n_i^{BR}(\boldsymbol{n}_{-i})$ of user $i$ with type $(o_i, p_i, d_{i,h}, d_{i,l})$ is given in (\ref{eq:br2}).
\begin{equation}\label{eq:br2}
\begin{split}
n_i^{BR}(\boldsymbol{n}_{-i})\!=
\!\left\{
    \begin{aligned}
    &0,~\text{if}~~ \frac{(\hat{\pi}(\boldsymbol{n},\theta)\!\!-\!\!\theta)(Q\!\!-\!\!D_l)\!-\!\frac{e}{2}(D_h\!\!+\!\!D_l)}{\kappa(Q-D_l)}\!<\!p_i\!<\!\\
    &~~~~~~~~\frac{\hat{\pi}(\boldsymbol{n},\theta)(D_h\!\!-\!\!Q)\!+\!\frac{e}{2}(D_h\!\!+\!\!D_l)}{\kappa(D_h-Q)}\text{ and }o_i\!=\!0, \\
    &1,~\text{otherwise},
    \end{aligned}
        \right.
\end{split}
\end{equation}
where \revj{$Q$}, $D_l$ and $D_h$ are the mean values of the distributions of \revj{$Q_i$}, $d_{i,l}$ and $d_{i,h}$, respectively.
\end{pps}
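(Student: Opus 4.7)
The plan is to prove (\ref{eq:br2}) by case analysis on $o_i$, and (conditional on $o_i = 0$) on user $i$'s role at the Stage III NE characterized in Theorem \ref{co:1}. For $o_i = 1$, I would first observe that $n_i = 1$ preserves access to the DTM at zero switching cost ($e_i(1) = 0$), while $n_i = 0$ incurs the strictly positive cost $e_i(0) = e[p_i d_{i,h} + (1-p_i) d_{i,l}]$ and forfeits the DTM option. Since the DTM option is weakly payoff-improving---the user can always pick the trivial bid $(s,0,0)$ or $(b,0,0)$ to recover the non-trading payoff---$n_i = 1$ weakly dominates $n_i = 0$, handling the ``otherwise'' branch of (\ref{eq:br2}) whenever $o_i = 1$.

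\textbf{Payoff comparison for $o_i = 0$.} The substantive case is $o_i = 0$, where $e_i(0) = 0$ and $e_i(1) = e[p_i d_{i,h} + (1-p_i) d_{i,l}]$. Using (\ref{eq:stage2}) together with $d_{i,l} < Q_i < d_{i,h}$ gives $U_{i,0} = -p_i \kappa (d_{i,h} - Q_i)$. To compute $U_{i,1}$, I would invoke Theorem \ref{co:1} to place user $i$ into one of three regions at the market-clearing price $\hat{\pi} \triangleq \hat{\pi}(\boldsymbol{n},\theta)$: seller if $p_i \leq P_L(\hat{\pi})$, non-participant if $P_L(\hat{\pi}) < p_i < P_H(\hat{\pi})$, and buyer if $p_i \geq P_H(\hat{\pi})$.

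\textbf{Three subcases.} Plugging the seller tuple $(s, \hat{\pi}, Q_i - d_{i,l})$ into (\ref{eq:1}) and evaluating $L(d_{i,l} - d_{i,h}) = -\kappa(d_{i,h} - d_{i,l})$ and $L(0) = 0$ will give
\begin{equation*}
U_{i,1} - U_{i,0} = \bigl[(\hat{\pi} - \theta) - p_i \kappa\bigr](Q_i - d_{i,l}) - e_i(1),
\end{equation*}
which is decreasing in $p_i$. Plugging the buyer tuple $(b, \hat{\pi}, d_{i,h} - Q_i)$ makes both loss terms vanish, yielding
\begin{equation*}
U_{i,1} - U_{i,0} = \bigl[p_i \kappa - \hat{\pi}\bigr](d_{i,h} - Q_i) - e_i(1),
\end{equation*}
which is increasing in $p_i$. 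The non-participant subcase gives $U_{i,1} - U_{i,0} = -e_i(1) < 0$, so the user strictly prefers to stay. Setting each nontrivial difference to zero and replacing the user-specific $(Q_i, d_{i,l}, d_{i,h})$ by their population means $(Q, D_l, D_h)$ in the large-$I$ limit, with $e_i(1)$ replaced by its representative value $\tfrac{e}{2}(D_h + D_l)$, will deliver the two thresholds in (\ref{eq:br2}). Because the seller threshold lies strictly below $P_L(\hat{\pi})$ and the buyer threshold strictly above $P_H(\hat{\pi})$, the stated ``stay'' interval properly contains the non-participant band $[P_L(\hat{\pi}), P_H(\hat{\pi})]$ and additionally captures the marginal sellers/buyers whose trade gain is positive but below their switching cost.

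\textbf{Main obstacle.} The main subtlety is the passage from the user-specific quantities $(Q_i, d_{i,l}, d_{i,h})$ in the raw payoff comparison to the population means $(Q, D_l, D_h)$ and the symmetrized representative cost $\tfrac{e}{2}(D_h + D_l)$ appearing in (\ref{eq:br2}). The large-population assumption in Section \ref{sec:new2a}, together with the mutual independence of the type distributions, justifies this reduction by making the relevant aggregate and per-user averages deterministic in the population limit, so that the indifference condition $U_{i,1} = U_{i,0}$ can be stated in the population-mean form of (\ref{eq:br2}). Once this reduction is in place, everything else is monotonicity bookkeeping: the seller gain decreases in $p_i$, the buyer gain increases in $p_i$, and the non-participant contribution is always negative, so the three subcases combine without ambiguity to produce the claimed interval.
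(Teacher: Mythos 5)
Your proposal is correct and takes essentially the same route as the paper's proof: the dominant-strategy argument for $o_i=1$ (zero switching cost plus the option of a trivial bid), and for $o_i=0$ a comparison of the Stage III equilibrium trading gain from Theorem \ref{co:1} against the switching cost $e_i(1)$, producing exactly the two thresholds in (\ref{eq:br2}) after the individual quantities and switching cost are replaced by their population-mean values. You actually spell out the seller/buyer payoff differences more explicitly than the paper's terse appendix, and your handling of the mean-value substitution mirrors (and is no weaker than) the paper's own large-population justification.
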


The proof of Proposition \ref{thm:br2} is given in Appendix \ref{app:e}.

\subsubsection{Unique NE Characterization}\label{sec:4NE}
Now we study the NE of the above operator selection game. According to the analysis of the equilibrium price $\hat{\pi}(\boldsymbol{n},\theta)$ in Theorem 1 and the best response in Proposition 2, we can obtain the unique NE in Lemma \ref{lem:stage1}. For notation convenience, we first define 
\begin{align}
P_L'(\pi) \triangleq \frac{(\pi\!-\!\theta)(Q\!-\!D_l)\!-\!\frac{1}{2}e(D_h\!+\!D_l)}{\kappa(Q\!-\!D_l)}, \label{eq:lpie}
\end{align}
and
\begin{align}
P_H'(\pi) \triangleq \frac{\pi(D_h\!-\!Q)\!+\!\frac{1}{2}e(D_h\!+\!D_l)}{\kappa(D_h\!-\!Q)},\label{eq:hpie}
\end{align}
\noindent which are the boundary indicators of user types. With these indicators, we can divide the users into different groups according to their user types.

\begin{lem}\label{lem:stage1}
At any NE $\boldsymbol{n}^*(\theta)$ for fixed $\theta$, there exists a unique transaction price $\hat{\pi}(\boldsymbol{n}^*(\theta),\theta)$ such that the following condition is satisfied.
\begin{align}\label{eq:lemma2}
\begin{split}
n_i^*(\theta) =
\!\left\{
    \begin{aligned}
    &0,\quad\text{if}~~ P_L'(\hat{\pi}(\boldsymbol{n}^*(\theta),\theta))\!<\!p_i\!<\!\\
    &~~~~~~~~~~~~P_H'(\hat{\pi}(\boldsymbol{n}^*(\theta),\theta))\text{ and }o_i\!=\!0, \\
    &1,\quad\text{otherwise},
    \end{aligned}
        \right.
\end{split}
\end{align}
where the equilibrium price $\hat{\pi}(\boldsymbol{n}^*(\theta),\theta)$ is the unique solution of the following equation:
\begin{align}\label{eq:price}
\sum_{i\in\{j:p_i\leq P_L(\hat{\pi}(\boldsymbol{n}^*(\theta),\theta))\}}
\!\!\!\!\!\!\!\!\!(Q\!-\!d_{i,l})+\sum_{i\in\{j:p_i\leq P_L'(\hat{\pi}(\boldsymbol{n}^*(\theta),\theta))\}}\!\!\!\!\!\!\!\!\!(Q\!-\!d_{i,l})=\notag\\
\sum_{i\in\{j:p_i\geq P_H(\hat{\pi}(\boldsymbol{n}^*(\theta),\theta))\}}\!\!\!\!\!\!\!\!\!(d_{i,h}\!-\!Q)+\sum_{i\in\{j:p_i\geq P_H'(\hat{\pi}(\boldsymbol{n}^*(\theta),\theta))\}}\!\!\!\!\!\!\!\!\!(d_{i,h}\!-\!Q).
\end{align}
Furthermore, if we assume that both the DTM users' $p_i$ for $i\in \mathcal{I}_1$ and the non-DTM users' $p_i$ for $i \in \mathcal{I} \backslash \mathcal{I}_1$ follow independent uniform distributions in the interval $[0,1]$, we can obtain a closed-form solution for equation (\ref{eq:price}) as
\begin{align}\label{eq:pi}
\hat{\pi}(\boldsymbol{n}^*(\theta),\theta)=\frac{(D_h-Q)\kappa+(Q-D_l)\theta}{D_h-D_l}.
\end{align}
\end{lem}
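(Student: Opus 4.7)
The plan is to combine Proposition~\ref{thm:br2} with the market-clearing condition that Theorem~\ref{co:1} imposes in Stage~III so as to cast the equilibrium as a fixed point in the anticipated price $\hat{\pi}(\boldsymbol{n}^*(\theta),\theta)$, then to invoke monotonicity to secure uniqueness, and finally to substitute the uniform distributions to extract the closed form. Condition~\eqref{eq:lemma2} is essentially Proposition~\ref{thm:br2} rewritten: users with $o_i=1$ have $n_i=1$ as a (weakly) dominant action because trading is optional once subscribed, while users with $o_i=0$ follow a two-sided threshold rule and switch to the DTM operator whenever the expected trading gain outweighs the switching cost, producing bounds $P_L'(\hat{\pi})$ and $P_H'(\hat{\pi})$.

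The next step is to derive the market-clearing identity. Given the choices induced by \eqref{eq:lemma2}, the DTM population is the union of $\{i:o_i=1\}$ with the $o_i=0$ users whose types satisfy $p_i\le P_L'(\hat{\pi})$ or $p_i\ge P_H'(\hat{\pi})$. Because $e\ge 0$, one has $P_L'(\hat{\pi})\le P_L(\hat{\pi})$ and $P_H'(\hat{\pi})\ge P_H(\hat{\pi})$, so every switched-in non-DTM user falls automatically in the seller region ($p_i\le P_L$) or the buyer region ($p_i\ge P_H$) identified by Theorem~\ref{co:1}. Applying Theorem~\ref{co:1} to this enlarged DTM set and equating total supply with total demand, with each seller contributing $Q_i-d_{i,l}$ and each buyer demanding $d_{i,h}-Q_i$, yields \eqref{eq:price}, where the two sums on each side correspond respectively to originally-DTM users (threshold $P_L$ or $P_H$) and switched-in users (threshold $P_L'$ or $P_H'$). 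Uniqueness of $\hat{\pi}$ then follows from a standard monotonicity argument: the left-hand side of \eqref{eq:price} is non-decreasing in $\hat{\pi}$ because higher prices recruit more sellers, while the right-hand side is non-increasing because higher prices discourage buyers; both are continuous, so the intermediate value theorem forces a single crossing.

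To obtain the closed form \eqref{eq:pi} under the uniform assumption, I would take $I\to\infty$ so that empirical averages converge to population means (exploiting the single-user-negligibility hypothesis of Section~\ref{sec:2}) and use the independence of $p_i$ from $Q_i,d_{i,l},d_{i,h}$. The four sums in \eqref{eq:price} then reduce to $\alpha P_L(\hat{\pi})(Q-D_l)$, $(1-\alpha)P_L'(\hat{\pi})(Q-D_l)$, $\alpha(1-P_H(\hat{\pi}))(D_h-Q)$, and $(1-\alpha)(1-P_H'(\hat{\pi}))(D_h-Q)$. Substituting the definitions \eqref{eq:a}, \eqref{eq:lpie}, and \eqref{eq:hpie}, the correction $(1-\alpha)e(D_h+D_l)/(2\kappa)$ appears symmetrically on both sides and cancels, leaving the linear identity $(Q-D_l)(\hat{\pi}-\theta)=(D_h-Q)(\kappa-\hat{\pi})$, which rearranges immediately to \eqref{eq:pi}.

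The main obstacle will be the fixed-point nature of the equilibrium: the operator choices in \eqref{eq:lemma2} depend on an anticipated price $\hat{\pi}$, while $\hat{\pi}$ is itself determined by the realised DTM population via \eqref{eq:price}. The crux is thus arguing that the consistency equation \eqref{eq:price} admits one and only one solution; the pleasant surprise under the uniform prior is that the switching-cost perturbations in $P_L'$ and $P_H'$ drop out exactly, producing a closed form whose dependence on $\alpha$ and $e$ disappears even though both quantities shape the composition of the DTM population.
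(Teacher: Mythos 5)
Your proposal is correct and follows essentially the same route as the paper's proof: obtain \eqref{eq:lemma2} from Proposition~\ref{thm:br2} with thresholds $P_L'$, $P_H'$, write market clearing via Theorem~\ref{co:1} over the original-DTM and switched-in populations to get \eqref{eq:price}, argue uniqueness by the monotonicity of the two sides, and reduce to the linear equation under the uniform assumption, where the $(1-\alpha)e(D_h+D_l)/(2\kappa)$ terms cancel to yield \eqref{eq:pi}. Your added observations (e.g., $P_L'\le P_L$, $P_H'\ge P_H$ ensuring switched-in users indeed trade) only make explicit what the paper leaves implicit.
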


The proof of Lemma \ref{lem:stage1} is given in Appendix \ref{app:f}.

In the long-run, since user $i$'s operator selection decisions will in turn change his user type $o_i$, the NE of Stages II and III will further evolve. In this paper, we assume that the operator makes Stage I decision based on a short-run user behavior, and characterize the conditions under which the operator will propose a DTM. 
In the next section, to better illustrate the insights in the Stage I problem, we will continue with our assumptions in Lemma \ref{lem:stage1}. For notational simplicity, we will write the equilibrium price $\hat{\pi}(\boldsymbol{n}^*(\theta), \theta)$ as $\hat{\pi}(\theta)$ in the later sections.

\section{Stage I: Operation Fee Optimization}\label{sec:4}
\rev{In this section, we first discuss the DTM operator's profit in Section \ref{sec:4a}. We then formulate its maximization problem in deciding the initial optimal operation fee given its initial market share in Section \ref{sec:4a2}.} Next, we discuss DTM's benefit from proposing this DTM in Section \ref{sec:4b}. 

\subsection{\rev{Model: Operator's Profit}}\label{sec:4a}
The operator's profit consists of three parts: (a) \emph{Base profit}: the difference between the subscription revenue and the cost of providing service from the users, (b) the \emph{total operation fee} charged on the \emph{sellers} for the transacted data quota, and (c) the \emph{overage usage fee} on the users for their data usage that exceeds their quota. 

These three parts are functions of the unit operation fee $\theta$, which should be less than the usage-based unit price $\kappa$  (i.e., $0 \leq \theta \leq \kappa$), or no transaction will happen in the market.
In addition, the operator has a cost of building and maintaining the DTM, which is denoted as $C_b$. Based on Theorem \ref{co:1}, the fractions of sellers, buyers, and the users who do not trade are shown in Fig. \ref{fig:twoline}.

\begin{figure}[t]
\centering
\includegraphics[width=9cm, trim = 2cm 9.5cm 4cm 4cm, clip = true]{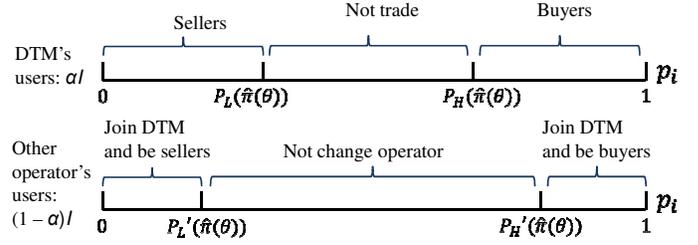}
\vspace{-3mm}
	  \caption{The operation and role selections by the users of both DTM and other operators.}\label{fig:twoline}
\end{figure}

%
%
%

First, the base profit in (a) can be calculated as the product of (a1) the number of users and (a2) the difference between subscription revenue and the cost of providing service for each user. They are described as follows: 
\begin{itemize}
\item (a1) First, according to Table \ref{table:co1}, the number of users is the original number of DTM subscribers $\alpha I$ plus the number of users switched from other operators $(1-\alpha )I(P_L'(\hat{\pi}(\theta))+1-P_H'(\hat{\pi}(\theta)))=(1-\theta/\kappa-e/2(D_h+D_l)/(\kappa(Q-D_l))-e/2(D_h+D_l)/(\kappa(D_h-Q)))(1-\alpha)I$, where $P_L'(\cdot)$ and $P_H'(\cdot)$ are defined in (\ref{eq:lpie}) and (\ref{eq:hpie}), respectively.
\item (a2) Second, the subscription revenue of each user is defined as $\beta$, and the average service cost of each user equals to the product of unit cost $c$ and the average demand $(D_h+D_l)/2$.
\end{itemize}
Hence, the base profit in (a) is:
\begin{align}
P_a(\theta)=&\left(\beta-\frac{c(D_h+D_l)}{2}\right)\Bigg(\alpha I+(1-\frac{\theta}{\kappa}-\frac{e(D_h+D_l)}{2\kappa(Q-D_l)}\notag\\
&-\frac{e(D_h+D_l)}{2\kappa(D_h-Q)})(1-\alpha)I\Bigg).
\end{align}

Second, the total operation fee is charged on the sellers, which includes the original sellers of DTM and the sellers who are switched from other operators. It can be calculated as the product of (b1) the total number of sellers, (b2) the average trading quantity per seller, and the unit operation fee $\theta$. 
\begin{itemize}
\item (b1) For the original sellers of DTM, according to $a_i^*$ in Table \ref{table:co1}, the total number of sellers with $o_i=1$ equals to $\alpha IP_L(\hat{\pi}(\theta))$, where $P_L(\hat{\pi}(\theta))$ is  the fraction of users who can successfully sell some data\footnote{Notice that some users with type $P_L(\hat{\pi}(\theta)) < p_i < P_H(\hat{\pi}(\theta))$ may choose to be a seller but cannot sell any data.} as defined in (\ref{eq:a}). For the sellers who are switched from other operators, according to Lemma 2, the total number of sellers with $o_i=0$ equals to $(1-\alpha) IP_L'(\hat{\pi}(\theta))$, where $P_L'(\cdot)$ is defined in (\ref{eq:lpie}).
\item  (b2) Then, from $q_i^*$ in Table \ref{table:co1}, every seller $i$ with a type $0\leq p_i\leq P_L(\hat{\pi}(\theta))$ will propose a quantity $Q-d_{i,l}$. Thus, the average quantity per seller equals to $Q-D_l$. 
\end{itemize} 
Hence, the total operation fee in (b) is
\begin{align}
P_b(\theta)&=\frac{\alpha\theta I(\hat{\pi}(\theta)\!-\!\theta)}{\kappa}\left(Q\!-\!D_l\right)\notag\\
&+\frac{(1\!-\!\alpha)I\theta(\hat{\pi}(\theta)\!-\!\theta)(Q\!-\!D_l)\!-\!\frac{e}{2}(D_h\!+\!D_l)}{\kappa(Q\!-\!D_l)}(Q\!-\!D_l).
\end{align}

Third, based on Theorem \ref{co:1}, the overage usage fee in (c) are charged on (c1) the sellers and (c2) the DTM users who do not trade.
\begin{itemize}
\item (c1) First, the overage usage fee on the sellers can be calculated as the product of the total number of sellers (which is described in (b1)), the expected overage usage of sellers, and the unit overage usage fee $\kappa$.
Since each original DTM seller will propose $Q-d_{i,l}$ and the users with $p_i<P_L(\hat{\pi}(\theta))$ are sellers, their average overage usage is $D_h-D_l$, and the average probability of high demand is $P_L(\hat{\pi}(\theta))/2=(\hat{\pi}(\theta)-\theta)/(2\kappa)$.  For the sellers who are switched from other operators, according to Lemma 2, their average overage usage is also $D_h-D_l$, but their average probability of high demand is $P_L'(\hat{\pi}(\theta))/2=((\hat{\pi}(\theta)-\theta)(Q-D_l)-e/2(D_h+D_l))/(2\kappa(Q-D_l))$.
Hence, the overage usage fee on the sellers is
\begin{align}
    &P_{c1}(\theta)=\frac{\alpha I (\hat{\pi}(\theta)-\theta)}{\kappa}\kappa\left(D_h-D_l\right)\frac{(\hat{\pi}(\theta)-\theta)}{2\kappa}\notag\\
&+\!\!\frac{(1\!\!-\!\!\alpha) I\kappa\!\left(D_h\!\!-\!\!D_l\right)}{2}\!\!\left(\!\!\frac{ (\hat{\pi}(\theta)\!\!-\!\!\theta)(Q\!\!-\!\!D_l)\!-\!\frac{e}{2}(D_h\!\!+\!\!D_l)}{\kappa(Q-D_l)}\!\right)^2\!\!\!.
\end{align}
\item (c2) Second, the overage usage fee on the DTM users who do not trade can be calculated as the product of the total number of DTM users who do not trade, the expected overage usage of DTM users who do not trade, and the unit overage usage fee $\kappa$. According to the second row in Table 1, the total number of users who do not trade is $\alpha I\theta/\kappa$. For the users who do not trade, their average overage usage is $D_h-Q$. According to the second row in Table 1, the average probability of high demand is $(P_L(\hat{\pi}(\theta))+P_H(\hat{\pi}(\theta)))/2$. Hence, the expected overage usage of users who do not trade is $(\hat{\pi}(\theta)-\theta/2)(D_h-Q)$, and the overage usage fee on the DTM users who do not trade is
\begin{align}
    P_{c2}(\theta)=\theta \alpha I\kappa(D_h-Q)\frac{\hat{\pi}(\theta)-\frac{\theta}{2}}{\kappa}.
\end{align}
\end{itemize}

Overall, the overage usage fee in (c) is
\begin{align}
&P_c(\theta)=P_{c1}(\theta) +P_{c2}(\theta)\notag\\
&=\frac{\alpha I (\hat{\pi}(\theta)-\theta)}{\kappa}\kappa\left(D_h-D_l\right)\frac{(\hat{\pi}(\theta)-\theta)}{2\kappa}\notag\\
&+\!\frac{(1\!-\!\alpha) I\kappa\left(D_h\!-\!D_l\right)}{2}\!\left(\!\frac{ (\hat{\pi}(\theta)\!-\!\theta)(Q\!-\!D_l)\!-\!\frac{1}{2}e(D_h\!+\!D_l)}{\kappa(Q-D_l)}\!\right)^2\notag\\
&+\theta \alpha I\kappa(D_h-Q)\frac{\hat{\pi}(\theta)-\frac{\theta}{2}}{\kappa}.
\end{align}

\subsection{\rev{Problem Formulation: Profit Maximization}}\label{sec:4a2}

Overall, by combining the above three parts, we can write the operator's profit maximization problem as
\begin{equation}
\max_{0\leq\theta\leq \kappa} ~~~ P(\theta) \triangleq P_a(\theta)+P_b(\theta)+P_c(\theta)-C_b.\label{eq:8}
\end{equation}
We can verify that $P(\theta)$ is a concave function of $\theta$, so problem (\ref{eq:8}) is a \emph{convex} optimization problem. As a result, by characterizing the first order condition, we can obtain the unique optimal operation fee in \emph{closed-form}.\footnote{\rev{Overall, after applying backward induction to analyze the three-stage game, the optimal operation fee $\theta^*$ in Stage I (Theorem \ref{thm:22}), the equilibrium operator selection $n_i^*(\theta)$ in Stage II (Lemma \ref{lem:stage1}), and the equilibrium trading decision $\bs{x}_i^* = (a_i^*, q_i^*, \pi_i^*)$ in Stage III (Theorem \ref{co:1}) together constitute the subgame perfect equilibrium \cite{shoham_ma08}.}}
\begin{thm}\label{thm:22}
The operator's optimal seller's operation fee $\theta^*$ is shown in (\ref{eq:thm22}).
\end{thm}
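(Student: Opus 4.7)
The plan is to exploit the closed-form equilibrium price $\hat{\pi}(\theta)$ from Lemma \ref{lem:stage1} to reduce the operator's profit $P(\theta)$ to an explicit low-degree polynomial in $\theta$, verify the concavity claim, and then obtain $\theta^*$ via the first-order condition on the compact interval $[0,\kappa]$.

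First, I would substitute equation (\ref{eq:pi}), namely $\hat{\pi}(\theta) = \frac{(D_h-Q)\kappa+(Q-D_l)\theta}{D_h-D_l}$, into each of $P_a(\theta)$, $P_b(\theta)$, and $P_c(\theta)$. Since $\hat{\pi}(\theta)$ is affine in $\theta$, the recurring quantity $\hat{\pi}(\theta)-\theta$ collapses to $\frac{(D_h-Q)(\kappa-\theta)}{D_h-D_l}$, which is also affine in $\theta$. As a consequence, $P_a(\theta)$ is affine in $\theta$, while $P_b(\theta)$ and $P_c(\theta)$ are each at most quadratic in $\theta$: $P_b$ contains products of the form $\theta\,(\hat{\pi}(\theta)-\theta)$, and $P_c$ contains terms of the form $(\hat{\pi}(\theta)-\theta)^2$ together with $\theta\,(\hat{\pi}(\theta)-\theta/2)$. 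Thus $P(\theta)=A\theta^2+B\theta+C-C_b$ for constants $A,B,C$ that are explicit rational expressions in the primitives $\alpha, I, \beta, c, \kappa, e, Q, D_h, D_l$.

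Second, I would verify concavity by computing $P''(\theta)=2A$ and checking that $A<0$. The leading coefficient is driven by the quadratic pieces of $P_b$ (seller revenue scaled by the operation fee) and of $P_c$ (overage usage of sellers and non-traders); both act in the same direction because raising $\theta$ simultaneously shrinks the trading volume and the inflow of switching non-DTM users. Once $A<0$ is established, concavity implies the unique unconstrained optimizer $\theta_0 \triangleq -B/(2A)$, and the feasibility constraint $\theta\in[0,\kappa]$ is handled by projection, giving
\[
\theta^* = \min\!\left\{\max\!\left\{-\tfrac{B}{2A},\,0\right\},\,\kappa\right\}.
\]
Expanding $A$ and $B$ in terms of the model primitives then yields the closed-form expression reported in (\ref{eq:thm22}).

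The main obstacle will be the algebraic bookkeeping rather than any conceptual difficulty: the three profit components together involve several products of affine terms in $\theta$, so extracting compact coefficients requires disciplined factoring. A useful tactic is to group the supply-side contributions (those proportional to $Q-D_l$, coming from sellers and switchers) separately from the demand-side contributions (those proportional to $D_h-Q$, coming from buyers and non-traders), and to keep $\hat{\pi}(\theta)-\theta$ in the factored form $\frac{(D_h-Q)(\kappa-\theta)}{D_h-D_l}$ throughout so that cross-terms cancel cleanly. With this organization, verifying $A<0$ reduces to checking the sign of a sum of positive-weighted squares, and the quotient $-B/(2A)$ consolidates into the stated closed form, with the clipping at $0$ and $\kappa$ addressing the corner cases in which the unconstrained optimum falls outside the feasible interval.
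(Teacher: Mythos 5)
Your overall skeleton coincides with the paper's own (very terse) argument: the paper gives no appendix proof of Theorem~\ref{thm:22}; it substitutes the equilibrium price of Lemma~\ref{lem:stage1} into $P_a$, $P_b$, $P_c$, asserts that $P(\theta)$ is concave so that (\ref{eq:8}) is a convex problem, and reads off (\ref{eq:thm22}) from the first-order condition projected onto $[0,\kappa]$. Your simplification $\hat{\pi}(\theta)-\theta=\frac{(D_h-Q)(\kappa-\theta)}{D_h-D_l}$, the observation that $P(\theta)=A\theta^2+B\theta+C-C_b$, and the clipping $\theta^*=\min\{\max\{-B/(2A),0\},\kappa\}$ are exactly the right bookkeeping and reproduce the $\max$--$\min$ structure of (\ref{eq:thm22}).

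The gap is in the concavity step, which is the one substantive claim in this proof, and your sketch of it is wrong. It is not true that the quadratic pieces of $P_b$ and $P_c$ ``act in the same direction,'' and $A<0$ does not reduce to a negatively signed sum of positive-weighted squares: the seller-overage term $P_{c1}(\theta)$ consists of \emph{positive} multiples of $(\hat{\pi}(\theta)-\theta)^2=\bigl(\tfrac{D_h-Q}{D_h-D_l}\bigr)^2(\kappa-\theta)^2$, hence is convex in $\theta$ and contributes $+\frac{I(D_h-Q)^2}{2\kappa(D_h-D_l)}$ to $A$, while $P_b$ contributes $-\frac{I(D_h-Q)(Q-D_l)}{\kappa(D_h-D_l)}$ and the sign of the $\theta^2$ term of $P_{c2}$ flips with the sign of $\tfrac12-\tfrac{D_h-Q}{D_h-D_l}$. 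So $A$ is a difference of competing terms; already the $P_b$ and $P_{c1}$ contributions alone sum to a positive quantity whenever $D_h-Q>2(Q-D_l)$, so negativity of $A$ is parameter-dependent (it does hold for the paper's simulation values, e.g.\ $Q=20$ or $22$, $D_h=25$, $D_l=15$) rather than structural. To complete the proof you must expand $A$ explicitly and either establish $A<0$ under stated conditions on $\alpha,\kappa,Q,D_h,D_l$ or treat the non-concave case separately, because if $A\ge 0$ the maximizer of (\ref{eq:8}) lies at an endpoint and the stationary-point formula underlying (\ref{eq:thm22}) no longer applies. The paper itself only says ``we can verify'' concavity, so on the skeleton you match its level of rigor, but the specific argument you propose for the sign of $A$ would fail as written.
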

\begin{figure*}[htp] 
\begin{equation}\label{eq:thm22}
\theta^*=\max\left\{0,\min\left\{\kappa,\frac{\kappa}{2}+\frac{(\beta\!-\!\frac{c(D_h\!+\!D_l)}{2})(\alpha\!-\!1)(Q\!-\!D_l)^2-\frac{1}{2}(D_h\!-\!Q)^2(D_h\!-\!D_l)\kappa+\frac{\alpha}{2}(D_h\!-\!Q)(Q\!-\!D_l)^2\kappa}{(2-\alpha)(D_h-Q)(Q-D_l)^2+2\alpha(D_h-Q)^2(Q-D_l)-(D_h-Q)^2(D_h-D_l)}\right\}\right\}.
\end{equation}
\vspace{-1mm}
\end{figure*}
 
\subsection{\rev{Analysis: Benefit of Data Trading Market}}\label{sec:4b}
Next, we will discuss the conditions under which an operator can benefit from proposing a DTM.

To quantify the benefit, we first derive the operator's profit without a DTM.
The operator's profit without a DTM consists of two parts, namely the base profit and the overage usage fee, which are structurally similar to the part (a) and part (c) in Section \ref{sec:4a}. The base profit is the product of the number of users and the difference between subscription revenue and the cost of providing service for each user, so it is equal to $\alpha I\beta-\alpha I(D_h+D_l)c/2$. The overage usage fee is the product of the number of users and the average expected demand among users, so it is equal to $\alpha I\kappa(D_h-Q)/2$. Notice that we do not have operation fee as in part (b) in Section \ref{sec:4a}, because there is no DTM. The profit without DTM as $P_0$ is thus:
\begin{align}
P_0=\frac{\alpha I\kappa(D_h-Q)}{2}-\frac{\alpha I(D_h+D_l)c}{2}+\alpha I\beta.
\end{align}

To characterize whether the operator will benefit from proposing a DTM, we define the threshold market share $\tilde{\alpha}$ in \eqref{eq:alpha} and obtain the following proposition.
\begin{pps}\label{pps:alpha}
If an operator's initial market share $\alpha < \tilde{\alpha}$, then we have $P(\theta^*) > P_0$.
\end{pps}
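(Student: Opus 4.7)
The plan is to analyze the function $\Delta(\alpha) := P(\theta^*) - P_0$ as a function of the initial market share $\alpha$, show that it is strictly decreasing, and identify $\tilde{\alpha}$ as the unique root. Since $\theta^*$ has the closed-form expression \eqref{eq:thm22} and $P_0$ is known explicitly, this reduces the claim to a scalar monotonicity argument in $\alpha$.

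First, I would substitute $\theta^*$ from \eqref{eq:thm22} into $P(\theta) = P_a(\theta)+P_b(\theta)+P_c(\theta)-C_b$, using the equilibrium price $\hat{\pi}(\theta) = ((D_h-Q)\kappa + (Q-D_l)\theta)/(D_h-D_l)$ from Lemma \ref{lem:stage1}, to obtain $P(\theta^*)$ as an explicit (piecewise) function of $\alpha$. Subtracting the linear expression for $P_0$ gives $\Delta(\alpha)$. The threshold $\tilde{\alpha}$ is precisely the value at which $\Delta(\tilde{\alpha}) = 0$.

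Second, I would establish strict monotonicity $d\Delta/d\alpha < 0$ via the envelope theorem: since $P(\theta)$ is concave in $\theta$ and $\theta^*$ is the maximiser, $dP(\theta^*;\alpha)/d\alpha = \partial P/\partial \alpha$ evaluated at $\theta^*$, so no chain-rule term in $d\theta^*/d\alpha$ is needed. The intuition is that marginally increasing $\alpha$ converts one non-DTM potential switcher into an original DTM subscriber. The former contributes to the operator's profit only through the fraction who switch (net of switching cost) plus their operation/overage fees, while the latter contributes full subscription margin plus a reduced overage stream (since some become sellers) plus operation fees. Comparing these two marginal contributions, the non-DTM side wins for every interior parameter configuration, so $d\Delta/d\alpha < 0$. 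Combined with continuity of $\Delta$ in $\alpha$ (which also requires checking continuity across the three branches of the $\max/\min$ in \eqref{eq:thm22}), uniqueness of $\tilde{\alpha}$ follows.

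Third, after verifying $\Delta(0) > 0$ (small operator: DTM attracts new subscribers and grows the pie, provided $C_b$ is not prohibitive, which is implicit in the well-posedness of $\tilde{\alpha}$) and $\Delta(1) < 0$ (monopolist: no users to attract, DTM only cannibalises overage revenue and adds $C_b$), the intermediate value theorem together with strict monotonicity pins down a unique $\tilde{\alpha} \in (0,1)$, and $\alpha < \tilde{\alpha}$ is equivalent to $\Delta(\alpha) > 0$, i.e., $P(\theta^*) > P_0$.

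The main obstacle is the algebraic bookkeeping: the interior branch of $\theta^*$ in \eqref{eq:thm22} is a rational function of $\alpha$, so $P(\theta^*)$ is non-linear in $\alpha$, and one must verify that the envelope-theorem derivative simplifies to an expression of definite sign for all admissible parameters $(\beta, c, \kappa, Q, D_h, D_l, e)$. A secondary difficulty is handling the boundary regimes $\theta^* \in \{0, \kappa\}$ in the piecewise definition of \eqref{eq:thm22}; one must confirm that $\Delta$ remains continuous and monotone across the switching between these regimes, which amounts to checking that the left- and right-derivatives of $\Delta$ at each boundary-crossing value of $\alpha$ are both negative.
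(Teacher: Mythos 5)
Your overall strategy---define $\Delta(\alpha)\triangleq P(\theta^*)-P_0$, identify $\tilde{\alpha}$ as its root, and argue that $\Delta$ is strictly decreasing in $\alpha$---is the natural route, and it is essentially the only one available (the paper itself states Proposition \ref{pps:alpha} without an appendix proof, simply exhibiting the threshold \eqref{eq:alpha}, so the implicit argument is precisely the algebra you sketch). But as written your proposal has two genuine gaps. First, the crux, $d\Delta/d\alpha<0$ for all admissible parameters, is asserted by a verbal comparison of marginal contributions rather than established. It is not self-evident: both $P(\theta^*;\alpha)$ and $P_0(\alpha)$ are increasing in $\alpha$ (an additional original subscriber brings the full margin $\beta-c(D_h+D_l)/2$ in both scenarios), so the sign of the difference of derivatives hinges on a cancellation between the forgone switchers' margin weighted by $(1-\theta^*/\kappa-\cdots)$, the lost per-user overage revenue $\kappa(D_h-Q)/2$, and the operation-fee and residual-overage terms coming from $P_b$ and $P_c$. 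The envelope theorem only removes the $d\theta^*/d\alpha$ term; it does not determine this sign, and the sentence ``the non-DTM side wins for every interior parameter configuration'' is exactly the step whose algebra you defer. Second, the proposition refers to the \emph{specific} $\tilde{\alpha}$ in \eqref{eq:alpha}, so you must verify that the root of $\Delta$ coincides with that expression. This is not routine bookkeeping: \eqref{eq:alpha} contains neither the switching-cost parameter $e$ nor the unit cost $c$, and $C_b$ enters multiplied by demand terms, so the formula can only emerge from $P_a+P_b+P_c-C_b-P_0=0$ under a particular branch of \eqref{eq:thm22} and/or simplifications that your plan never identifies. Without pinning this down, the chain ``$\Delta(\tilde{\alpha})=0$ and $\Delta$ decreasing $\Rightarrow$ ($\alpha<\tilde{\alpha}\Rightarrow P(\theta^*)>P_0$)'' is not anchored to the stated threshold.

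A smaller point: for the statement as given you do not need $\Delta(0)>0$ and $\Delta(1)<0$ (and the former can fail for large $C_b$, as you concede); strict monotonicity together with the identification $\Delta(\tilde{\alpha})=0$ already yields the claim, and if $\tilde{\alpha}\notin(0,1)$ the proposition is simply vacuous or trivially satisfied on the relevant range. So the intermediate-value step can be dropped, whereas the continuity and sign checks across the $\max/\min$ branches of \eqref{eq:thm22} that you flag are indeed required---but only as part of the monotonicity argument that remains to be carried out.
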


\begin{figure*}
\begin{align}\label{eq:alpha}
\tilde{\alpha} =\frac{(D_l-Q)(-2\beta+C_b(D_h+D_l)+2\kappa(Q-D_h))}{-2D_h^2\kappa-2\beta D_l+C_bD_l^2+2\beta Q-C_b D_l Q+2\kappa D_l Q-4\kappa Q^2+D_h(C_bD_l-2\kappa D_l-C_bQ+2\kappa Q)}.
\end{align} 
\hrulefill
\end{figure*}

In other words, Proposition \ref{pps:alpha} establishes that an operator with a \emph{small market share} will benefit from proposing a DTM. 
Interestingly, it is in line with the real-world observation that CMHK, the smallest mobile operator in Hong Kong, is the only operator that deploys the DTM\cite{r:cmhkreport}.\footnote{\revj{Consider the case with two identical DTM operators, who have the same initial number of subscribers, subscription fee $\kappa$, unit cost $c$, and subscription revenue per user $\beta$. We can formulate their competition for subscribers as a Bertrand competition, where they will propose the same operation fee $\hat{\theta}$ such that $P(\hat{\theta}) = P_0$ (i.e., both of them cannot gain any profit from deploying such a DTM) at the equilibrium. This also explains why there is only one operator (i.e., CMHK) that deploys a DTM among all the operators in Hong Kong.}}

To better illustrate the insights, we will show the impact of different parameters on the benefit of proposing the DTM through numerical results in Section \ref{sec:sim}.


\section{Performance Evaluation}\label{sec:sim}
In this section, we first provide simulation results to evaluate how an operator can benefit from proposing a mobile DTM, and how different parameters impact on the benefit in Section \ref{sec:sim1}. Then, we evaluate the impact of the market share on the operation fee in Section \ref{sec:sim2}. We show the equilibrium social welfare under different operation fees in Section \ref{sec:sim3}. Finally, we illustrate the users' benefit from this mobile DTM in Section \ref{sec:sim4}.

\subsection{Mobile Data Trading Market's Benefit to Operator}\label{sec:sim1}
We evaluate the operator's benefit by comparing its profits with and without the DTM.

\begin{figure}[t]
\centering
\includegraphics[width = 0.35\textwidth]{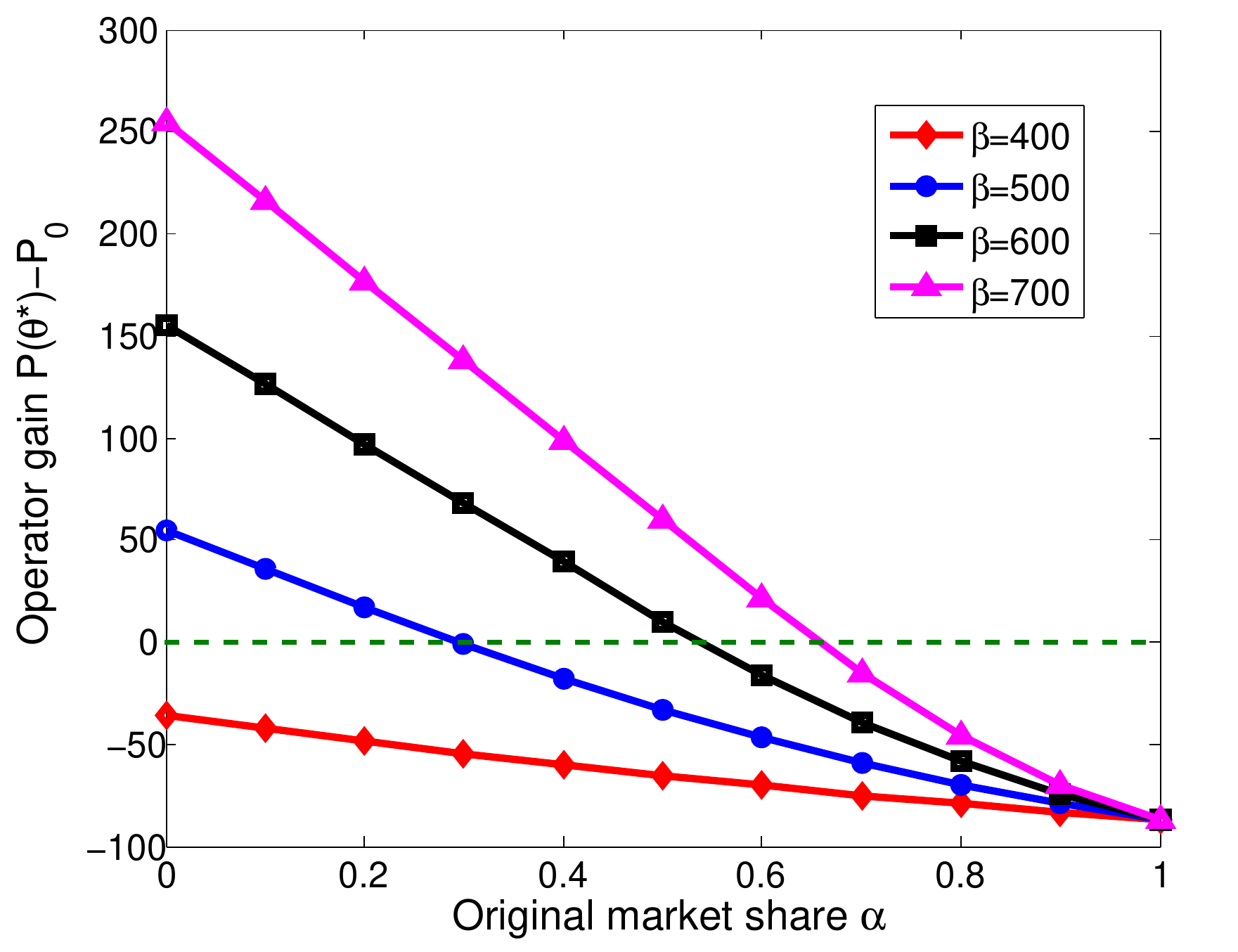}
\vspace{-3mm}
	  \caption{The mobile operator's profit gain of proposing a DTM $P(\theta^*)-P_0$ versus the operator's initial market share $\alpha$ for different subscription fee $\beta$.}\label{fig:sim3}
	  
\end{figure}

In Fig. \ref{fig:sim3}, we plot the profit increment $P(\theta^*)-P_0$ against the operator's initial market share $\alpha$ for different subscription fee $\beta$. In the simulation settings, we choose the data quota $Q=22$, the mean values of the high and low demands $D_h=25$ and $D_l=15$, the unit service cost $c=20$, the switching cost parameter $e=50$, and the DTM maintenance cost $C_b=100$. We observe that the profit increment is increasing in $\beta$. This is because the operator can attract more users by proposing a DTM, hence its profit will increase if the benefit from attracting one more user is larger. We can also see that the profit increment is decreasing in the operator's initial market share $\alpha$, such that there exists a threshold value $\tilde{\alpha}$ (as defined in (\ref{eq:alpha})), below which the operator is willing to propose a DTM. This is because when the initial market share is lower, the operator can attract more users when proposing a DTM. However, if it cannot attract many users, the operator will not benefit because there is maintenance cost for the DTM. For example, when $\beta=400$, proposing the DTM always reduce the profit. When $\beta=600$, the operator will propose a DTM when $\alpha<0.52$. 

\begin{figure}[t]
\centering
\includegraphics[width = 0.35\textwidth]{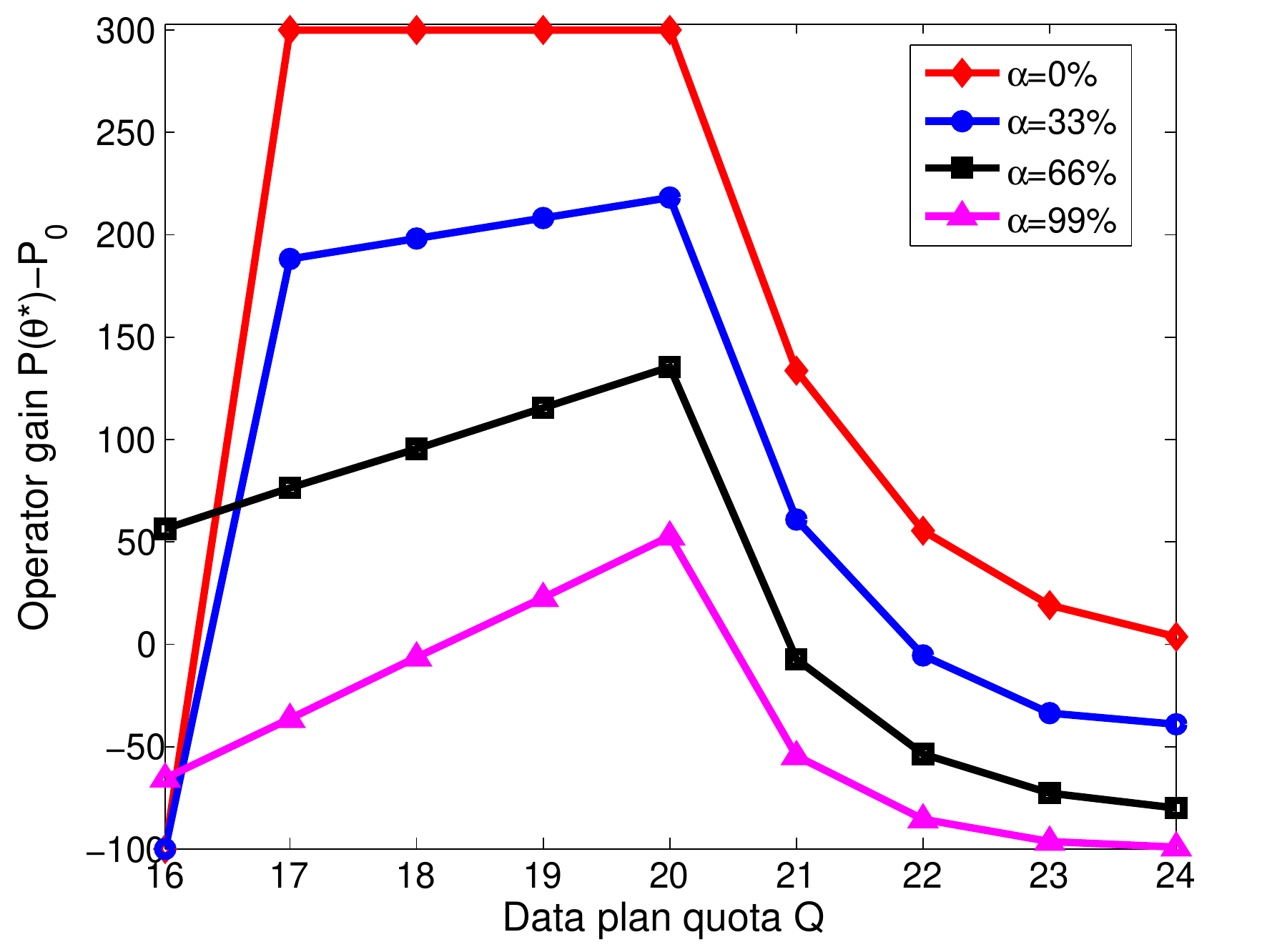}
\vspace{-3mm}
	  \caption{The mobile operator's profit gain of proposing a DTM $P(\theta^*)-P_0$ versus the data quota $Q$  for different $\alpha$.}\label{fig:sim6}
	  
\end{figure}

In Fig. \ref{fig:sim6}, we plot the profit increment $P(\theta^*)-P_0$ against the data quota $Q$ for different initial market share $\alpha$. In the simulation settings, we choose $\beta=500$, $D_h=25$, $D_l=15$, $c=20$, $e=50$, and  $C_b=100$. From Fig. \ref{fig:sim6}, we can see that when the data quota is small ($Q<20$), the operator's profit increment $P(\theta^*)-P_0$ is non-decreasing in $Q$. On the other hand, when the data quota is large ($Q>20$), the operator's profit increment $P(\theta^*)-P_0$ is decreasing in $Q$. It is because increasing data quota will make the sellers more willing to trade but the buyers less willing to trade. When $D_h-Q>Q-D_l$, the sellers' average selling quantity is less than buyers' average buying quantity, hence a larger $Q$ will lead to more trades and increase the operator's revenue. On the other hand, when $D_h-Q<Q-D_l$, the sellers' average selling quantity is more than buyers' average buying quantity, hence a larger $Q$ will lead to less trades and decrease the operator's revenue.

\begin{figure}[t]
    \centering
        \includegraphics[width=0.45\textwidth]{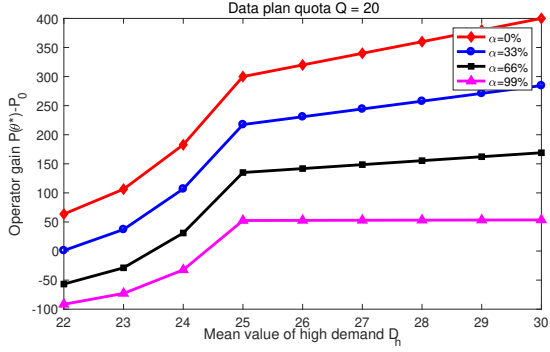} 
    \caption{\rev{The mobile operator's profit gain of proposing a DTM $P(\theta^*)-P_0$ versus the mean value of high demand $D_h$ for different initial market share $\alpha$.}}
    \label{fig:profitvsdh}
\end{figure}	

  \rev{To understand how the user demand affects the DTM operator, we plot the DTM operator’s profit increment $P(\theta^*)-P_0$ against the mean value of high demand $D_h$ in Fig. \ref{fig:profitvsdh}. We can see that when the high demand increases beyond the data plan quota $Q=20$, the operator’s profit increment increases. It is because when the high demand is larger than quota $Q$, there will be more trades, so more users are willing to switch to the DTM operator, which increases its profit.}

\begin{figure}[t]
    \centering
        \includegraphics[width=0.45\textwidth]{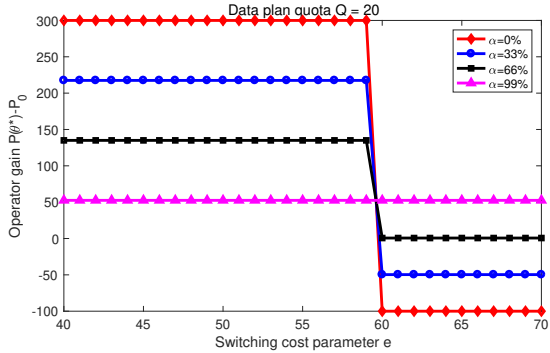} 
    \caption{\rev{The mobile operator's profit gain of proposing a DTM $P(\theta^*)-P_0$ versus the switching cost parameter $e$ for different initial market share $\alpha$.}}
    \label{fig:profitvse}
\end{figure}

\rev{In Fig.~\ref{fig:profitvse}, we plot the DTM operator’s profit increment $P(\theta^*)- P_0$ against the switching cost parameter $e$ for different initial market share $\alpha$. In the simulation settings, we choose usage fee $\kappa = 60$, $\beta = 500$, $D_h = 25$, $D_l = 15$, $c=20$, $Q=20$, and $C_b =100$. From Fig.~\ref{fig:profitvse}, we can see that the operator’s profit increment is piecewise constant with respect to $e$ and there is a sharp drop at $e = \kappa$. It is because when the switching cost parameter $e$ is larger than the overage usage fee $\kappa$, no user will change an operator, and then the DTM’s profit only comes from its original user’s overage usage fee and operation fee, which is not related to the switching cost. Notice that $e$ is the switching cost parameter, and the switching cost of a user equals to the switching cost parameter multiplied by its usage, so that the user’s usage and trading decisions is not related to the switching cost parameter when it is smaller than the overage usage fee $\kappa$.}

\begin{figure}[t]
\centering
\includegraphics[width = 0.35\textwidth]{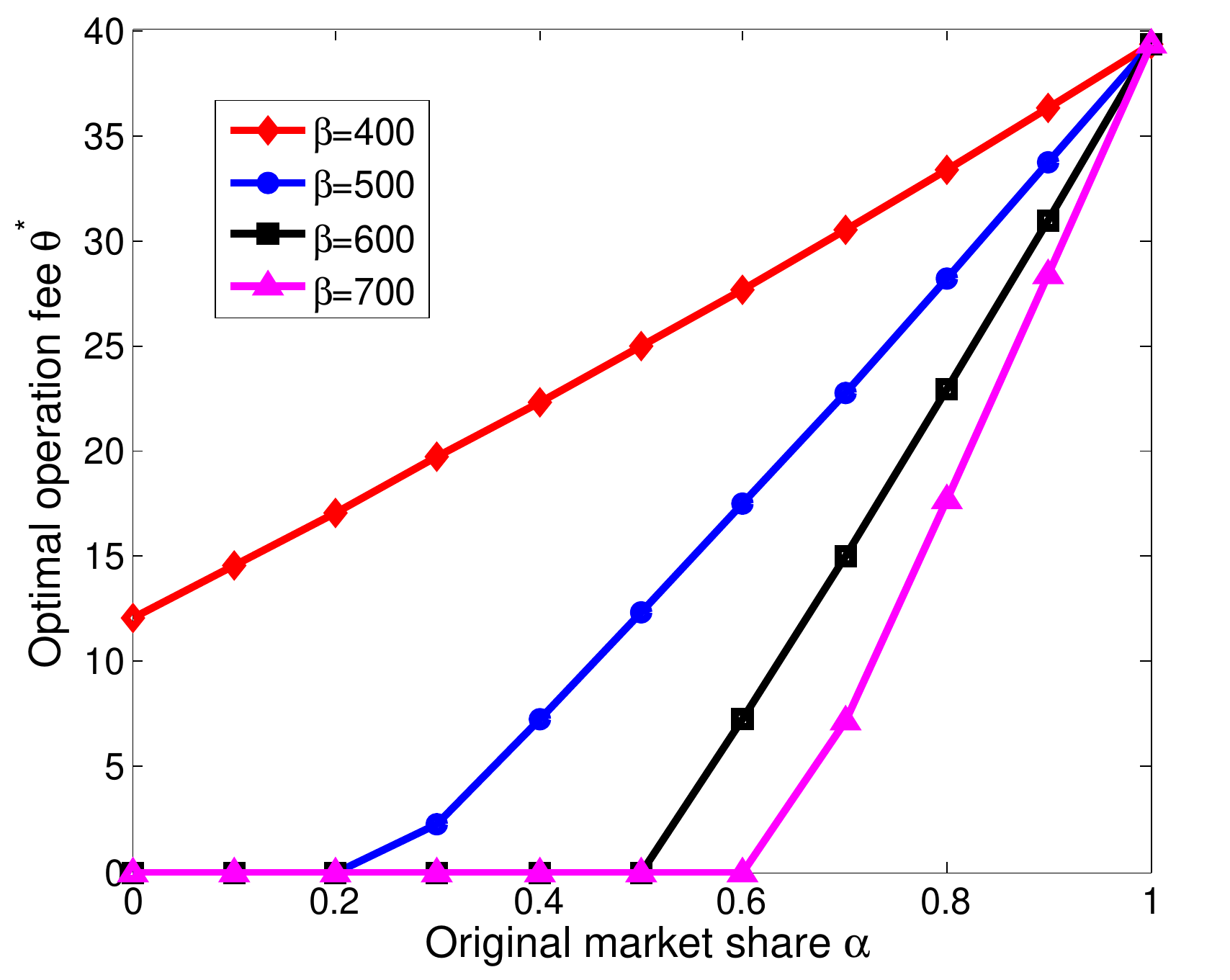}
\vspace{-3mm}
	  \caption{The mobile operator's optimal operation fee $\theta^*$ versus the operator's initial market share $\alpha$ for different subscription fee $\beta$.}\label{fig:sim4}
	 
\end{figure}

\subsection{The Impact of Parameters on Operation Fee}\label{sec:sim2}
In Fig. \ref{fig:sim4}, we plot the optimal operation fee $\theta^*$ against the operator's initial market share $\alpha$ for different subscription fee $\beta$. We observe that $\theta^*$ is increasing in both $\beta$ and $\alpha$. It is because when the benefit of attracting one more user is higher or when proposing the DTM can attract more users, the operator is more willing to decrease its operation fee to better promote the DTM.

\subsection{The Impact of Parameters on Social Welfare} \label{sec:sim3}
We define the total equilibrium user payoff $W_u$ as the sum of all DTM users' payoffs, where $W_u\!\triangleq\!\sum_{i=1}^I U_{i,1}\bigl(r_i(\boldsymbol{x}^*)\bigr)$. 
We define the equilibrium social welfare $W_t$  as the sum of all DTM users' payoffs plus the maximal revenue of the DTM operator (i.e., CMHK), where
$W_t \triangleq \sum_{i=1}^IU_{i,1}\bigl(r_i(\boldsymbol{x}^*)\bigr) + P(\theta^*)$.
 
 \begin{figure}[t]
\centering
\includegraphics[width = 0.35\textwidth]{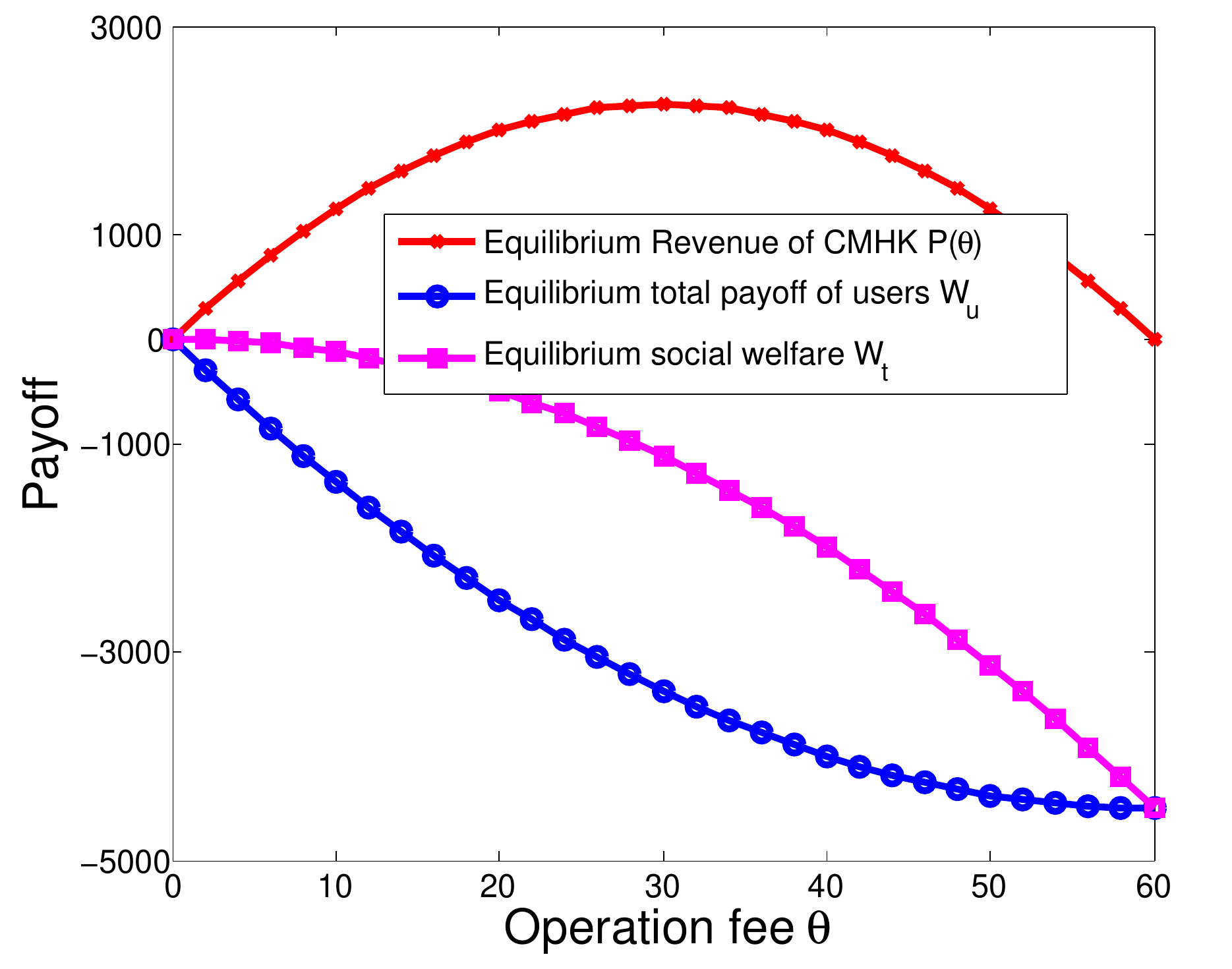}
\vspace{-3mm}
	  \caption{The mobile operator's equilibrium revenue $P(\theta)$, the total payoff of users $W_u$, and the equilibrium social welfare $W_t$ versus the operation fee $\theta$.}\label{fig:sim2}
\end{figure}
 
 In Fig. \ref{fig:sim2}, we show that both the total equilibrium user payoff $W_u$ and the equilibrium social welfare $W_t$ decrease with the operation fee $\theta$. 
 This is because when $\theta$ is larger, users need to pay more to the DTM operator, which results in fewer transactions.
Hence, an operation fee that is too large hurts both the users' payoffs and the DTM operator's revenue.

\subsection{Mobile Data Trading Market's Benefit to Users} \label{sec:sim4}

We evaluate the users' benefits by calculating the gap $G_i$ between DTM user $i$'s payoffs with and without DTM.

In Fig. \ref{fig:sim1}, we plot the payoff increment $G_i$ against the probability of high demand $p_i$, high demand $d_{i,h}$, and low demand $d_{i,l}$. We show that the users with small $p_i$ and large $p_i$ will benefit from trading in the market, while those with medium $p_i$ will not benefit. This is because the users with medium $p_i$ are the most uncertain about whether their usage will exceed quota or not, and hence will not trade data in the market. On the other hand, the users who have less uncertainty about their usage (i.e., the users whose $p_i$ are close to zero or one) will benefit a lot through trading. 

\begin{figure}[t]
\centering
\includegraphics[width = 0.35\textwidth]{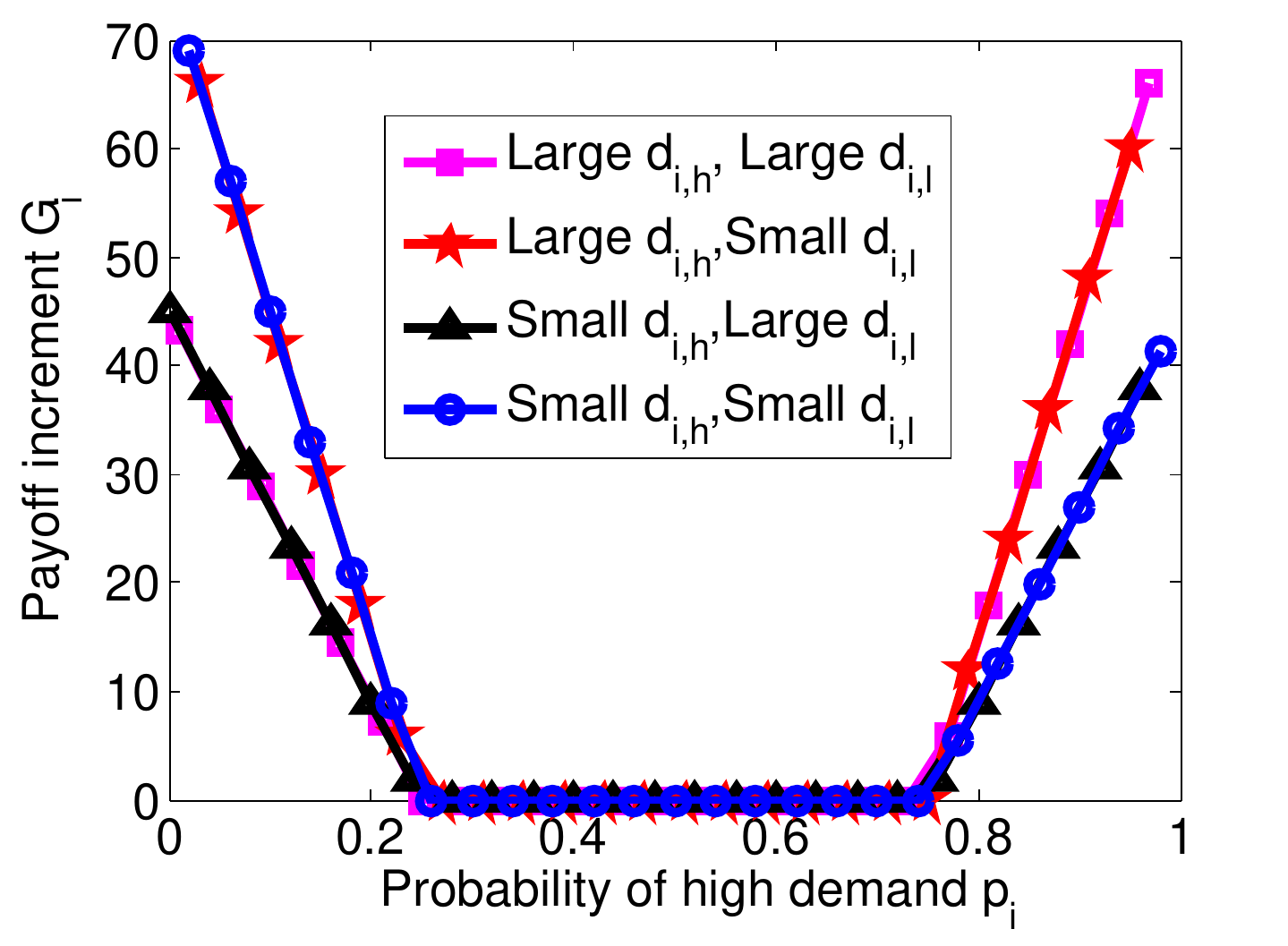}
\vspace{-3mm}
	  \caption{The gap $G_i$ between user $i$'s utilities with and without DTM.}\label{fig:sim1}
	  \vspace{-3mm}
\end{figure}

In addition, the high $p_i$ (i.e., $p_i>0.75$) users will benefit more if they have a larger $d_{i,h}$, and the low $p_i$ (i.e., $p_i<0.25$) users will benefit more if they have a smaller $d_{i,l}$. This is because the users will benefit more when they trade a larger quantity. Based on Theorem 1, a high $p_i$ user will be a buyer and propose a demand of $d_{i,h}-Q$, while a low $p_i$ user will be a seller and propose a supply of $Q-d_{i,l}$.

\section{Conclusion}\label{sec:6}
In this paper, we studied the users' choices of operators and their trading behavior in a data trading market (DTM). 
Our analysis revealed the following interesting insights. First, all the users who want to trade should propose the same price such that the total demand matches the total supply. Second, the non-DTM users who are certain about their usages can benefit more from data trading and will switch to the DTM operator. Third, it is beneficial for a small operator with a low initial market share to propose a DTM to attract subscribers, which is in line with the situation in Hong Kong.

In the future, we would like to understand the countermeasure of other operators in the market competition.
Apart from theoretical analysis, we would also conduct a market survey to understand the users' realistic responses to market dynamics.

\begin{IEEEbiography}
[{\includegraphics[width=1in,clip,keepaspectratio]{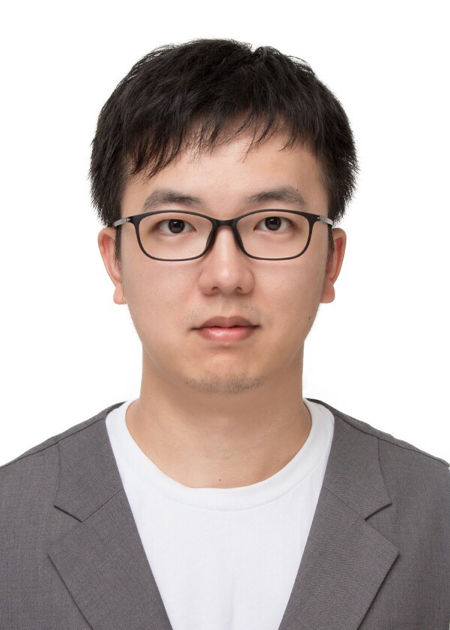}}]
{Junlin Yu} received his Ph.D. degree in the Department of Information Engineering at the Chinese University of Hong Kong in 2017. His research interests include behavioral economical studies and AI in recommendation and marketing, optimization and pricing in financial resource allocation, and optimization in communication and social networks. He is now an algorithm expert in Ant Group.
\end{IEEEbiography}

\begin{IEEEbiography}
[{\includegraphics[width=1in,height=1.25in,keepaspectratio]{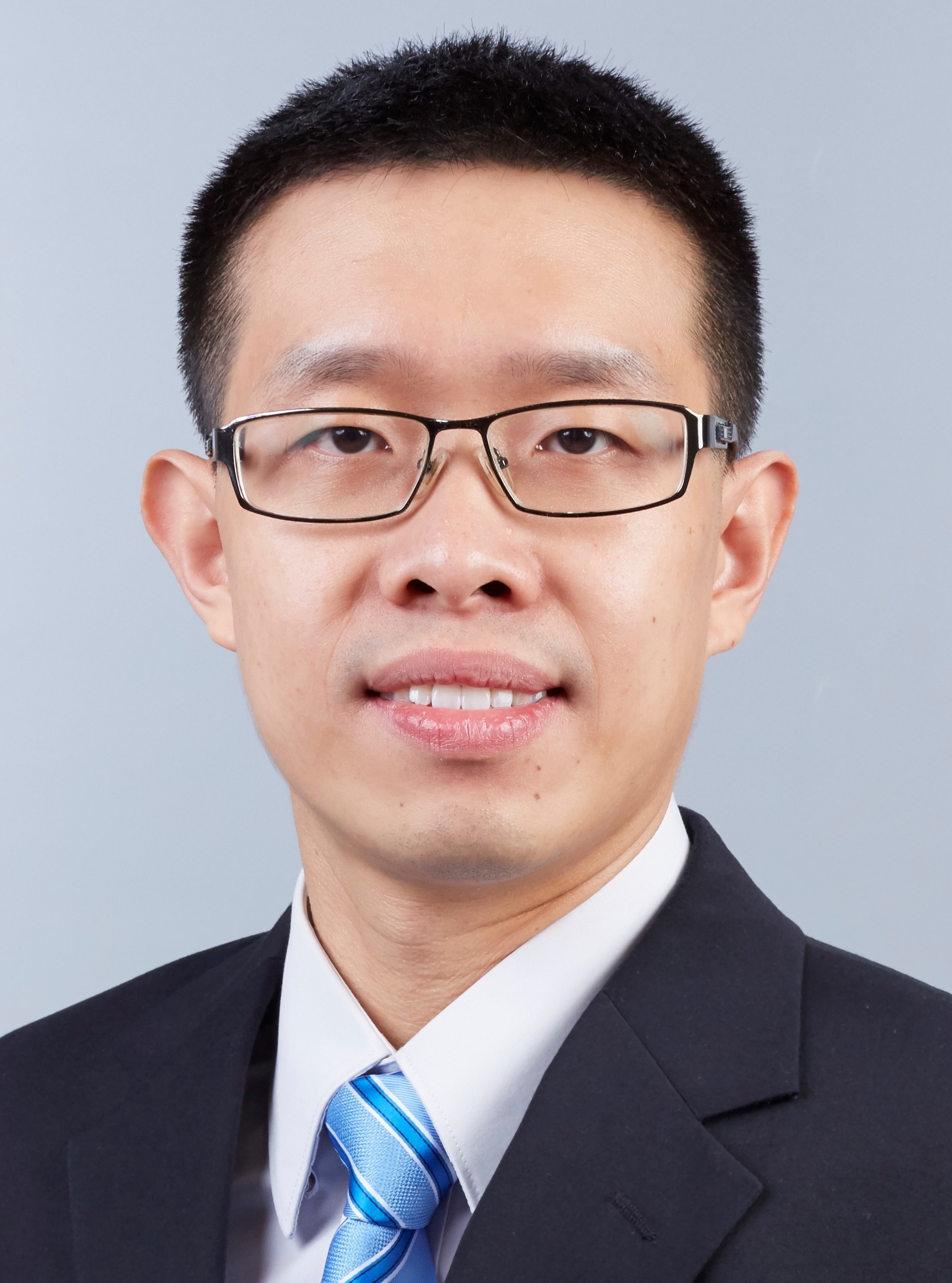}}] 
{Man Hon Cheung} is an Assistant Professor in the Department of Computer Science at the City University of Hong Kong. Previously, he was a Research Assistant Professor at the Department of Information Engineering at the Chinese University of Hong Kong (CUHK). He received the B.Eng. and M.Phil. degrees in Information Engineering from CUHK in 2005 and 2007, respectively, and the Ph.D. degree in Electrical and Computer Engineering from the University of British Columbia (UBC) in 2012. He was awarded the Graduate Student International Research Mobility Award by UBC, and the Global Scholarship Programme for Research Excellence by CUHK. He serves as a Technical Program Committee member in {\it IEEE INFOCOM}, {\it WiOpt}, {\it ICC}, {\it Globecom}, and {\it WCNC}. He is an associate editor of IEEE Communications Letters. %
\end{IEEEbiography}

\begin{IEEEbiography}[{\includegraphics[width=1in,keepaspectratio]{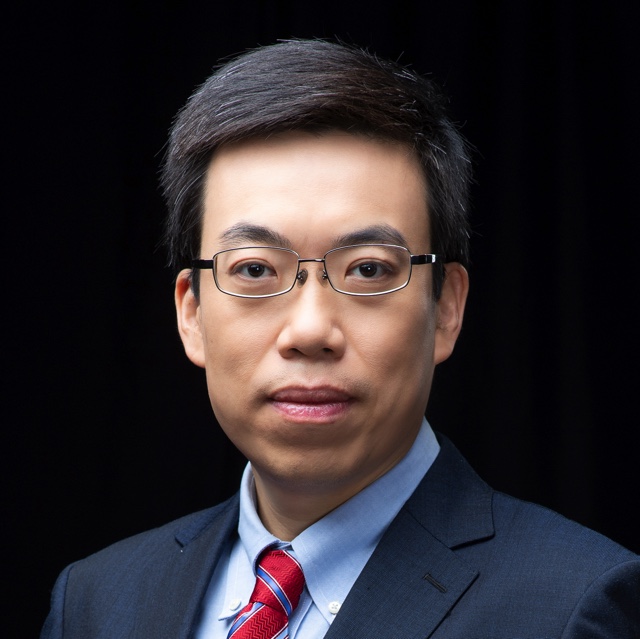}}]
{Jianwei Huang} (F'16) is a Presidential Chair Professor and the Associate Dean of the School of Science and Engineering, The Chinese University of Hong Kong, Shenzhen. He is also the Vice President of Shenzhen Institute of Artificial Intelligence and Robotics for Society. He received the Ph.D. degree from Northwestern University in 2005, and worked as a Postdoc Research Associate at Princeton University during 2005-2007. He has been an IEEE Fellow, a Distinguished Lecturer of IEEE Communications Society, and a Clarivate Analytics Highly Cited Researcher in Computer Science. He has published more than 280 papers in leading international journals and conferences in the area of network optimization and economics, with a total Google Scholar citations of 12,700+ and an H-index of 57. He is the co-author of 9 Best Paper Awards, including IEEE Marconi Prize Paper Award in Wireless Communications in 2011. He has co-authored seven books, including the textbook on "Wireless Network Pricing." He received the CUHK Young Researcher Award in 2014 and IEEE ComSoc Asia-Pacific Outstanding Young Researcher Award in 2009. He has served as an Associate Editor of IEEE Transactions on Mobile Computing, IEEE/ACM Transactions on Networking, IEEE Transactions on Network Science and Engineering, IEEE Transactions on Wireless Communications, IEEE Journal on Selected Areas in Communications - Cognitive Radio Series, and IEEE Transactions on Cognitive Communications and Networking. He has served as the Chair of IEEE ComSoc Cognitive Network Technical Committee and Multimedia Communications Technical Committee. He is the Associate Editor-in-Chief of IEEE Open Journal of the Communications Society. He is the recipient of IEEE ComSoc Multimedia Communications Technical Committee Distinguished Service Award in 2015 and IEEE GLOBECOM Outstanding Service Award in 2010. More detailed information can be found at https://sse.cuhk.edu.cn/en/faculty/huangjianwei.
\end{IEEEbiography}

\clearpage

\begin{center}
\Large{\textbf{Economics of Mobile Data Trading Market}} \\ 
\normalsize{Junlin Yu, Man Hon Cheung, and Jianwei Huang}
\end{center}
\normalsize

\appendices

\section{Details of the market mechanism}\label{app:a}
The detailed market mechanism is shown in Algorithm \ref{algo:dtm}. 

To help explain the transaction rules, we first define the following sets for every user before introducing Algorithm 1.
\begin{equation}
\mathcal{LS}_i \!=\!\! \left\{\!
    \begin{aligned}
    & \{(a_j,\pi_{j},q_{j}): a_j=s ~ \text{and}~ \pi_{j}\!<\!\pi_{i},\\
    &~~~~~~~~~~~~~~~~~~~~~\forall j\!\neq\!i,j\!\in\!\mathcal{I}_1\},\text{ if } a_i\!=\!s,\\
    & \{(a_j,\pi_{j},q_{j}): a_j=s ~\text{and}~\pi_{j}\!\leq\!\pi_{i},\\
    &~~~~~~~~~~~~~~~~~~~~~\forall j\!\neq\!i,j\!\in\!\mathcal{I}_1\}, \text{ if }a_i\!=\!b.\\
    \end{aligned}
    \right.
\end{equation}
If user $i$ is a seller, then set $\mathcal{LS}_i$ refers to the set of sellers, who have higher priorities than user $i$. If user $i$ is a buyer, then set $\mathcal{LS}_i$ refers to the set of sellers who are feasible to be matched with user $i$.
\begin{equation}
\mathcal{HB}_i \!\!=\!\! \left\{\!
    \begin{aligned}
    & \{(a_j,\pi_{j},q_{j}): a_j=b ~\text{and}~\pi_{j}\!\geq\!\pi_{i},\\
    &~~~~~~~~~~~~~~~~~~~~~\forall j\!\neq\! i,j\!\in\!\mathcal{I}_1\},  \text{ if } a_i\!=\!s,\\
    & \{(a_j,\pi_{j},q_{j}): a_j=b ~\text{and}~\pi_{j}\!>\!\pi_{i},\\
    &~~~~~~~~~~~~~~~~~~~~~\forall j\!\neq \!i, j\!\in\!\mathcal{I}_1\},  \text{ if } a_i\!=\!b.\\
    \end{aligned}
    \right.
\end{equation}
If user $i$ is a buyer, then set $\mathcal{HB}_i$ refers to the set of buyers who have higher priorities than user $i$. If user $i$ is a seller, then set $\mathcal{LS}_i$ refers to the set of buyers who are feasible to be matched with user $i$.
\begin{equation}
\mathcal{E}_i \!=\!\{(a_j,\pi_{j},q_{j}\!)\!\!:\!  a_j\!\!=\!a_i ~\text{and}~\pi_{j}\!=\!\pi_{i},\forall j\!\neq \!i,j\!\in\!\mathcal{I}_1\}.
\end{equation}
Set $\mathcal{E}_i$ refers to the set of users who have the same role and the same priority as user $i$.
\begin{align}
\mathcal{ES}_i \!=\!\{(a_j,\pi_{j},q_{j}\!):~& a_j\!\!=\!a_i ~,~\pi_{j}\!=\!\pi_{i},\notag\\
    &\text{and}~q_j<q_i,\forall j\!\neq \!i,j\!\in\!\mathcal{I}_1\}.
\end{align}
Set $\mathcal{ES}_i$ refers to the set of users who have the same role and the same priority as user $i$, and propose smaller quantity than user $i$.
\begin{align}
\tilde{\mathcal{ES}}_i \!=\!\{(&a_j,\pi_{j},q_{j}\!):~ a_j\!\!=\!a_i ~,~\pi_{j}\!=\!\pi_{i},\notag\\
    &\text{and}~q_j\leq\frac{\!\sum\limits_{\boldsymbol{x_k}\in\mathcal{LS}_i}\!q_{k}\!-\!\sum\limits\limits\limits_{\boldsymbol{x_k}\in\mathcal{HB}_i}\!q_{k}\!-\!\sum\limits\limits_{\boldsymbol{x_k}\in\mathcal{ES}_j}\!q_{k}}{|\mathcal{E}_i|-|\mathcal{ES}_j|},\notag\\
    &\forall j\!\neq \!i,j\!\in\!\mathcal{I}_1\}.
\end{align}
Set $\tilde{\mathcal{ES}}_i$ is a subset of $\mathcal{ES}_i$. It refers to the set of users who have the same role and the same priority as user $i$, and propose a very small quantity. Every bid in $\tilde{\mathcal{ES}}_i$ is with small enough quantity such that all the users in $\mathcal{ES}_i$ who propose smaller quantities than it can be satisfied.

If the accumulated buying quantity proposed by the buyers within the set $\mathcal{HB}_i$ is smaller than the accumulated selling quantity proposed by the sellers within the set $\mathcal{LS}_i$, then a seller $i$'s supply cannot be cleared. On the other hand, if $\sum_{\boldsymbol{x_j}\in\mathcal{HB}_i}q_{j}\geq \sum_{\boldsymbol{x_j}\in\mathcal{LS}_i}q_{j}$, then the seller $i$ will \emph{equally share} the demands with the sellers of the same priority (i.e., those in set $\mathcal{E}_i$) \cite{r:poorm,r:emarket}. However, if the equal share is larger than the seller $i$'s supply (i.e., $(\sum_{\boldsymbol{x_j}\in\mathcal{HB}_i}q_{j}-\sum_{\boldsymbol{x_j}\in\mathcal{LS}_i}q_{j})/|\mathcal{E}_i|>q_{i}$ for some seller $i$), then his $r_i(\boldsymbol{x_i},\boldsymbol{x_{-i}})=q_{i}$, and the ``burden left'' is averaged over the other sellers with the same price. We will continue with this procedure until each seller $j$ that proposes the same price has an $r_j(\boldsymbol{x_j},\boldsymbol{x_{-j}})\leq q_{j}$. The same rule also applies when two buyers propose the same price.

Based on the transaction rule, among the sellers or buyers proposing the same price, the users who propose a smaller quantity will get all their quantity transacted before the users who propose a higher quantity. As an example in Fig.~\ref{fig:1}, consider three buyers, $1$, $2$, and $3$, proposing the same buying price of \$14, where $q_1=3$ GB, $q_2=4$ GB, and $q_3=8$ GB. Hence their total demand is 15 GB as shown on the second row, but there are only 5 GB left in seller's market (with the selling price of \$13) that can be allocated to them. According to the transaction rule in (\ref{eq:realization1}) and (\ref{eq:realization2}), they will equally divide the 5 GB, i.e., $r_1(\boldsymbol{x})=r_2(\boldsymbol{x})=r_3(\boldsymbol{x})=5/3$ GB. Consider a different scenario where the three buyers' demands are $q_1=1$ GB, $q_2=6$ GB, and $q_3=8$ GB, then the allocations are $r_1(\boldsymbol{x})=1$ GB and $r_2(\boldsymbol{x})=r_3(\boldsymbol{x})=2$ GB. This is because with the result under the equal division ($5/3$) exceeds user $1$'s demand, hence the exceeded part is equally shared by the remaining two buyers. \QEDB

\begin{algorithm}[t] \label{algo:dtm}
 \caption{Mobile Data Trading Market Allocation Mechanism}
\For{$i$ \text{in} $\mathcal{I}$}
{ User $i$ submit his bid $\boldsymbol{x_i}=(a_i,\pi_i,q_i)$ to the DTM operator.
}
\For{$i$ in $\mathcal{I}$}
{ 
\If{$a_i=s$}{
User $i$ can get $r_i(\boldsymbol{x_i},\boldsymbol{x}_{-i})$ allocated, where
\begin{align}\label{eq:realization1}
&r_i(\boldsymbol{x_i},\boldsymbol{x}_{-i})=\notag\\&\min\left\{q_i,\frac{\!\sum\limits_{\boldsymbol{x_j}\in\mathcal{HB}_i}\!q_{j}\!-\!\sum\limits_{\boldsymbol{x_j}\in\mathcal{LS}_i}\!q_{j}\!-\!\sum\limits\limits_{\boldsymbol{x_j}\in\tilde{\mathcal{ES}}_i}\!q_{j}}{|\mathcal{E}_i|-|\tilde{\mathcal{ES}}_i|}\right\}.
\end{align}
}
\ElseIf{$a_i=b$}
{
User $i$ can get $r_i(\boldsymbol{x_i},\boldsymbol{x}_{-i})$ allocated, where
\begin{align}\label{eq:realization2}
&r_i(\boldsymbol{x_i},\boldsymbol{x}_{-i})=\notag\\&\min\left\{q_i,\frac{\!\sum\limits_{\boldsymbol{x_j}\in\mathcal{LS}_i}\!q_{j}\!-\!\sum\limits\limits\limits_{\boldsymbol{x_j}\in\mathcal{HB}_i}\!q_{j}\!-\!\sum\limits\limits_{\boldsymbol{x_j}\in\tilde{\mathcal{ES}}_i}\!q_{j}}{|\mathcal{E}_i|-|\tilde{\mathcal{ES}}_i|}\right\}.
\end{align}
}
}
Calculate the operator's income of gap between the selling and buying prices as follows:
\begin{align}
P_{gap}=\sum_{i\in\{j\in\mathcal{I}_1:a_j=s\}}\!\!\!\pi_i r_i(\boldsymbol{x})-\!\!\!\sum_{i\in\{j\in\mathcal{I}_1:a_j=b\}}\!\!\!\pi_i r_i(\boldsymbol{x}).
\end{align}
\end{algorithm}

\section{Proof of Proposition 1}\label{app:b}
Given a strategy profile $\boldsymbol{x}$, by (\ref{eq:stp}) and (\ref{eq:btp}), we can obtain the corresponding transaction selling price $\hat{\pi}_s$ and transaction buying price $\hat{\pi}_b$. We first study the user $i$'s optimal decision of quantity, given his role and price, and other users' strategies. From (\ref{eq:thm2}), we can find the optimal proposed quantity $q_i^{BR}(\boldsymbol{x}_{-i})$ for a user given his price $\pi_i^{BR}(\boldsymbol{x}_{-i})$ and $a_i^{BR}(\boldsymbol{x}_{-i})$ as
\begin{align}
     q_i^{BR}(\boldsymbol{x}_{-i})\left\{\!\!
    \begin{aligned}
    &=d_{i,h}\!-\!Q, ~ \textstyle \text{if} \quad p_i\!\geq\!\frac{\pi_i^{BR}(\boldsymbol{x}_{-i})}{\kappa},\\
    &=Q\!-\!d_{i,l},~~ \textstyle \text{if}\quad p_i\!\leq\!\frac{\pi_i^{BR}(\boldsymbol{x}_{-i})\!-\!\theta}{\kappa},\\
    &\,\rev{0}, ~~~\textstyle \text{if}\quad \frac{\pi_i^{BR}(\boldsymbol{x}_{-i})\!-\!\theta}{\kappa}\!<\!p_i\!<\!\frac{\pi_i^{BR}(\boldsymbol{x}_{-i})}{\kappa}. 
    \end{aligned}
    \right.
\end{align}

Then we study the decision of price and role. According to the utility function in (\ref{eq:thm2}) and the transaction rule in (\ref{eq:realization1}) and (\ref{eq:realization2}), the low $p_i$ users will choose to be a seller and the high $p_i$ users will choose to be a buyer, i.e., 
\begin{align}
     a_i^{BR}(\boldsymbol{x}_{-i})\left\{\!\!
    \begin{aligned}
    &=b,  ~\text{if}\quad p_i\!\geq\!\frac{\pi_i^{BR}(\boldsymbol{x}_{-i})}{\kappa},\\
    &=s, ~\text{if}\quad p_i\!\leq\!\frac{\pi_i^{BR}(\boldsymbol{x}_{-i})\!-\!\theta}{\kappa}.\\
    \end{aligned}
    \right.
\end{align}

Due to the discontinuity in the seller's utility function on the first line of (\ref{eq:thm2}), the utility $U_{i,1}(r_i(\boldsymbol{x}_i,\boldsymbol{x}_{-i}))$ for a seller first increases in $\pi_i$ when $\pi_i\in[0,\hat{\pi}_s-\epsilon)$, then has one or two discontinuous decreases (i.e., discontinuous jump downwards) in $\pi_i$ when $\pi_i\in[\hat{\pi}_s-\epsilon,\hat{\pi}_s+\epsilon]$\footnote{There is a discontinuous decrease in utility between $\pi_i=\hat{\pi}_s$ and  $\pi_i=\hat{\pi}_s+\epsilon$, but we still need to discuss whether there is a discontinuous decrease between $\pi_i=\hat{\pi}_s-\epsilon$ and  $\pi_i=\hat{\pi}_s$.}, and then stays flat in $\pi_i$ when $\pi_i\in(\hat{\pi}_s+\epsilon,\kappa]$. 
Due to the equal division rule, the sellers with smaller selling quantity $Q-d_{i,l}$ (the sellers whose supplies satisfy the condition on Line 1 in (\ref{eq:thm2})) will be satisfied first, which means that their supplies will be cleared by proposing a selling price $\hat{\pi}_s$. Hence, the unique discontinuous decrease is between the point $\pi_i=\hat{\pi}_s$ and the point $\pi_i=\hat{\pi}_s+\epsilon$, i.e.,
$$U_{i,1}\!(r_i(\!(a_i^{B\!R}\!,\hat{\pi}_s-\epsilon,q_i^{B\!R}),\boldsymbol{x}_{-i})\!)\!<\!U_{i,1}\!(r_i(\!(a_i^{B\!R}\!,\hat{\pi}_s,q_i^{B\!R}),\boldsymbol{x}_{-i})\!),$$
$$U_{i,1}\!(r_i(\!(a_i^{B\!R}\!,\hat{\pi}_s,q_i^{B\!R}),\boldsymbol{x}_{-i})\!)\!>\!U_{i,1}\!(r_i(\!(a_i^{B\!R}\!,\hat{\pi}_s+\epsilon,q_i^{B\!R}),\boldsymbol{x}_{-i})\!).$$
In this case, the optimal price is at the transaction selling price $\hat{\pi}_s$. On the other hand, the sellers with a larger selling quantity $Q-d_{i,l}$ (the sellers whose supplies satisfy the condition on Line 2 in (\ref{eq:thm2})) can decrease their price by a small amount $\epsilon$ to beat the sellers who propose $\hat{\pi}_s$ and fully get their quantity transacted. In other words, there is also a discontinuous decrease between the point $\pi_i=\hat{\pi}_s-\epsilon$ and the point $\pi_i=\hat{\pi}_s$, i.e.,
$$U_{i,1}\!(r_i(\!(a_i^{B\!R}\!,\hat{\pi}_s-\epsilon,q_i^{B\!R}),\boldsymbol{x}_{-i})\!)\!>\!U_{i,1}\!(\!r_i(\!(a_i^{B\!R}\!,\hat{\pi}_s,q_i^{B\!R}),\boldsymbol{x}_{-i})\!),$$
$$U_{i,1}\!(r_i(\!(a_i^{B\!R}\!,\hat{\pi}_s,q_i^{B\!R}),\boldsymbol{x}_{-i})\!)\!>\!U_{i,1}\!(r_i(\!(a_i^{B\!R}\!,\hat{\pi}_s+\epsilon,q_i^{B\!R}),\boldsymbol{x}_{-i})\!).$$
In this case, the optimal price is at $\hat{\pi}_s-\epsilon$.
By the above analysis, we obtain the results on the first and second lines of (\ref{eq:thm2}).

Similarly, due to the discontinuity in the buyer's utility function on the second line of (\ref{eq:thm2}), the utility $U_{i,1}(r_i(\boldsymbol{x}_i,\boldsymbol{x}_{-i}))$ for a buyer is first flat in $\pi_i$ when $\pi_i\in[0,\hat{\pi}_b-\epsilon)$, then has one or two discontinuous increases in $\pi_i$ when $\pi_i\in[\hat{\pi}_b-\epsilon,\hat{\pi}_b+\epsilon]$, and then decreases in $\pi_i$ when $\pi_i\in(\hat{\pi}_b+\epsilon,\kappa]$. By a similar analysis, we can show that the buyers with smaller buying quantity $d_{h,l}-Q$ (the buyers whose demands satisfy the condition on Line 3 in (\ref{eq:thm2})) have an optimal price at $\hat{\pi}_b$, and the buyers with larger buying quantity $d_{h,l}-Q$ (the buyers whose demands satisfy the condition on Line 4 in (\ref{eq:thm2})) have an optimal price at $\hat{\pi}_b+\epsilon$. Hence, we obtain the results on the third and fourth lines of (\ref{eq:thm2}).

Finally, for the users with type $(\hat{\pi}_s-\theta)/\kappa<p_i<\hat{\pi}_b/\kappa$, \rev{the utility $U_i((a_i,\pi_i,q_i^{BR}),\boldsymbol{x}_{-i})$ stays flat in $\pi_i$, so we choose $\pi_i^{BR}(\boldsymbol{x}_{-i}) = 0$.}


By the above analysis, we obtain the results in (\ref{eq:thm2}).\QEDB

\section{Proof of Lemma 1}\label{app:c}
In this proof, we do not consider the users who are not willing to trade. In other words, we only consider the user $i\in\tilde{\mathcal{I}}=\{j:j\in\mathcal{I},r_j(\boldsymbol{x})>0\}$.
First, we show that the following lemma holds at the equilibrium. 
\begin{lem}\label{lem:pps21}
For any two sellers or two buyers proposing the same price, both of them can get all their quantity transacted. That is, if $a_j^*=a_k^*$ and $\pi_j^*=\pi_k^*$, we have $r_j(\boldsymbol{x}^*)=q_j^*$ and $r_k(\boldsymbol{x}^*)=q_k^*$.
\end{lem}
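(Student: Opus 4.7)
My plan is to prove Lemma \ref{lem:pps21} by contradiction, leveraging the best response characterization already established in Proposition \ref{prop:br}. Suppose, without loss of generality, that two users $j, k \in \tilde{\mc{I}}$ are sellers at the NE $\bs{x}^*$ with identical price $\pi_j^* = \pi_k^* = \pi^* > 0$ and positive quantities $q_j^*, q_k^* > 0$, but seller $j$ is not fully cleared: $r_j(\bs{x}^*) < q_j^*$. The buyer case is symmetric. The goal is to exhibit a profitable unilateral deviation for $j$, contradicting the NE condition (\ref{eq:def4}).

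The first step is to identify the transaction selling price $\hat{\pi}_s$ that $j$ faces given $\bs{x}_{-j}^*$. Because $k$ bids at price $\pi^*$ with $q_k^* > 0$ and receives a positive allocation, the definition (\ref{eq:stp}) forces $\hat{\pi}_s = \pi^*$ in the subsystem $\bs{x}_{-j}^*$. Next, since $r_j(\bs{x}^*) < q_j^* = \revj{Q_j} - d_{j,l}$, the equal-share allocation available to the sellers tied at price $\pi^*$ must be strictly smaller than $\revj{Q_j} - d_{j,l}$, which is precisely the negation of the bracketed condition on line~1 of (\ref{eq:thm2}). Hence Proposition \ref{prop:br} dictates that $j$'s unique best response lies on either line~2 (bidding $\hat{\pi}_s - \epsilon$ with quantity $\revj{Q_j} - d_{j,l}$) or line~5 (the null action $(s,0,0)$). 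In either case, the prescribed best response strictly disagrees with $\bs{x}_j^* = (s, \pi^*, q_j^*)$, contradicting the assumption that $\bs{x}_j^*$ maximizes $U_{j,1}$ against $\bs{x}_{-j}^*$.

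The main obstacle I expect will be the boundary cases where the equal-share inequality becomes tight, or where $p_j$ sits exactly on one of the thresholds $(\pi^* - \theta)/\kappa$ or $(\pi^* - \epsilon - \theta)/\kappa$. In those degenerate situations, the seller is indifferent between trading and not trading, so the deviation I construct yields the same payoff rather than a strictly larger one, and the strict contradiction weakens. To close this gap, I would argue that any such degenerate equilibrium can be replaced by an equivalent NE in which every weakly-indifferent, partially-cleared seller adopts the $(s,0,0)$ action; this preserves all market clearing and best-response conditions while trivially satisfying the lemma. Alternatively, one may invoke the continuity of $U_{j,1}$ in $p_j$ away from the explicit jump points in (\ref{eq:thm2}) to conclude that the set of boundary types has measure zero under any continuous distribution assumed in Section \ref{sec:new32}, so the contradiction holds in the generic case that Lemma \ref{lem:pps21} is intended to capture.
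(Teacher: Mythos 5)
Your overall strategy---contradiction via the profitability of an $\epsilon$-undercut for a partially cleared seller---is the same idea as the paper's, but the step that is supposed to deliver the contradiction does not hold as written. From Proposition \ref{prop:br} you conclude that $j$'s best response lies on line~2 or line~5 of (\ref{eq:thm2}) and ``strictly disagrees'' with $\boldsymbol{x}_j^*$, and you treat this mismatch as a violation of (\ref{eq:def4}). But the NE condition only requires $\boldsymbol{x}_j^*$ to be \emph{a} payoff maximizer; a mismatch with the formula in Proposition \ref{prop:br} yields a contradiction only if that formula identifies a strictly better (unique) maximizer, which Proposition \ref{prop:br} does not assert---and indeed for types with $(\hat{\pi}_s-\epsilon-\theta)/\kappa<p_j\le(\hat{\pi}_s-\theta)/\kappa$, partial clearing at $\hat{\pi}_s$ earns a nonnegative per-unit surplus $(\hat{\pi}_s-\theta)-p_j\kappa$ that can beat both the line-2 undercut and the null action, so the formula cannot be used as a black box here. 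The paper instead exhibits the strictly profitable deviation explicitly: it writes out $U_{j,1}$ at $(s,\pi_j^*,q_j^*)$ and at $(s,\pi_j^*-\epsilon,q_j^*)$, observes that undercutting raises the transacted amount from $r_j(\boldsymbol{x}^*)$ to $\min\{q_j^*,\,r_j(\boldsymbol{x}^*)+r_k(\boldsymbol{x}^*)\}>r_j(\boldsymbol{x}^*)$ (using $r_k(\boldsymbol{x}^*)>0$ since $k\in\tilde{\mathcal{I}}$), and invokes $p_j\kappa<\pi_j^*-\theta$ so that the additional units yield strictly positive net surplus for sufficiently small $\epsilon$, directly contradicting (\ref{eq:def4}). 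Your proof needs this payoff comparison; the appeal to the best-response formula alone is the genuine gap.

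Your patches for the indifference/boundary cases also do not close the gap for the lemma as stated. Replacing a degenerate equilibrium by an ``equivalent'' NE in which indifferent, partially cleared sellers play $(s,0,0)$ shows at best that \emph{some} NE has the desired property, whereas Lemma \ref{lem:pps21} (and its use in proving Lemma \ref{lem:1}) is a claim about \emph{every} NE; and the measure-zero argument imports a genericity assumption on types that is not among the lemma's hypotheses. The paper sidesteps these cases by asserting the strict inequality $p_j\kappa<\pi_j^*-\theta$ for sellers who trade and then taking $\epsilon$ small, which is the statement you would need to justify (or explicitly assume) rather than replace with an equilibrium-selection or genericity argument.
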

\begin{proof}
We prove the lemma by contradiction. Assume that there exist $j$ and $k$ such that $$a_j^*=a_k^*=s, \pi_j^*=\pi_k^*, r_j(\boldsymbol{x}^*)< q_j^*,$$ then based on (\ref{eq:1}), we have
\begin{align}
&U_{j,1}(r_j((s,\pi_j^*,q_j^*),\boldsymbol{x}_{-j}^*))\notag\\
&=(\pi_j^*-\theta)r_j(\boldsymbol{x}^*)+p_i\kappa (\revj{Q_j}-r_j((s,\pi_j^*,q_j^*)-d_{j,h}),
\end{align}
and 
\begin{align}
U_{j,1}(r_j&((s,\pi_j^*\!-\!\epsilon,q_j^*),\boldsymbol{x}_{-j}^*))=\notag\\
 &(\pi_j^*\!-\!\epsilon\!-\!\theta)\min\{q_j^*,r_j(\boldsymbol{x}^*)\!+\!r_k(\boldsymbol{x}^*)\}\notag\\
&+p_i\kappa (\revj{Q_j}-\min\{q_j^*,r_j(\boldsymbol{x}^*)+r_k(\boldsymbol{x}^*)\}-d_{j,h}),
\end{align}
where $\epsilon$ is an extremely small positive number. 

According to (\ref{eq:thm2}), we have $p_i\kappa<\pi_i^*-\theta$ for a seller. Hence, when $\epsilon\rightarrow 0$, we have $$U_{j,1}(r_j((s,\pi_j^*-\epsilon,q_j^*),\boldsymbol{x}_{-j}^*))>U_{j,1}(r_j((s,\pi_j^*,q_j^*),\boldsymbol{x}_{-j}^*)),$$ which contradicts with (\ref{eq:def4}) in Definition 5.
    
The buyer's case is similar to seller's case, hence we omit the proof here.
\end{proof}

Next we prove that Lemma \ref{lem:pps22} also holds at the equilibrium.
\begin{lem}\label{lem:pps22}
Any two users $j$ and $k$ who want to trade will propose the same price, i.e., 
\begin{equation}
\pi_j^*=\pi_k^*, ~\forall j,k\in\tilde{\mathcal{I}}.
\end{equation}
\end{lem}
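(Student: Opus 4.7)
My plan is a proof by contradiction paralleling Lemma \ref{lem:pps21}. I will separately establish three facts: (i) all trading sellers propose a common price $\pi_s$; (ii) all trading buyers propose a common price $\pi_b$; (iii) $\pi_s = \pi_b$. The conjunction of (i)--(iii) yields the lemma.

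For (i), assume for contradiction that there exist trading sellers $j$ and $k$ with $\pi_j^* < \pi_k^*$. Pick any $\pi_j^{\mathrm{new}} \in (\pi_j^*, \pi_k^*)$ such that no other seller is originally priced in $(\pi_j^*, \pi_j^{\mathrm{new}}]$ (such a choice exists because prices lie on the grid with unit $\epsilon$). I deviate seller $j$'s price to $\pi_j^{\mathrm{new}}$. Since seller $k$ was trading in the original NE, one can verify using (\ref{eq:realization1}) that the aggregate demand from buyers at prices at least $\pi_k^*$ strictly exceeds the original supply from sellers priced strictly below $\pi_k^*$. Consequently, after the deviation the right-hand side of the min in (\ref{eq:realization1}) for seller $j$ is at least $q_j^*$, so $j$ fully clears. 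From the seller criterion in Proposition \ref{prop:br}, $p_j \leq (\pi_j^* - \theta)/\kappa$; substituting into (\ref{eq:1}) and using (\ref{equ:satloss}), seller $j$'s payoff on the transacted interval simplifies to $(\pi_j - \theta - p_j \kappa) r_j$ plus terms independent of $(\pi_j, r_j)$, which is non-decreasing in $r_j$ and strictly increasing in $\pi_j$ for $r_j > 0$. Thus $j$'s payoff strictly increases, contradicting (\ref{eq:def4}). Claim (ii) follows by the symmetric argument: the higher-priced buyer $k$ deviates to some $\pi_k^{\mathrm{new}} \in (\pi_j^*, \pi_k^*)$ that is not occupied by any other buyer, retains his transacted quantity by the analogous supply--demand balance, and strictly reduces his per-unit payment.

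For (iii), by (i) and (ii) all sellers are at a common price $\pi_s$ and all buyers at a common price $\pi_b$. Assume $\pi_s < \pi_b$ for contradiction. Market clearing forces $\sum_{i\in\text{sellers}} q_i^* = \sum_{i\in\text{buyers}} q_i^*$, and by Lemma \ref{lem:pps21} every trading seller fully clears. Consider seller $j$'s deviation to $\pi_s + \delta$ for any $\delta \in (0, \pi_b - \pi_s)$. By (\ref{eq:realization1}), since every buyer remains priced at $\pi_b > \pi_s + \delta$ and the remaining sellers are all at $\pi_s < \pi_s + \delta$, seller $j$'s allocation after deviation equals $\min\{q_j^*, \sum_{i\in\text{buyers}} q_i^* - (\sum_{i\in\text{sellers}} q_i^* - q_j^*)\} = q_j^*$, so $j$ still fully clears at a strictly higher price, strictly improving his payoff. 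This contradicts (\ref{eq:def4}) and completes the proof.

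The main obstacle I foresee is ensuring that the deviating seller's post-deviation transacted quantity $r_j^{\mathrm{new}}$ weakly dominates the original $r_j^*$. This requires careful reasoning about the priority queue encoded by the sets $\mc{LS}_i, \mc{HB}_i, \mc{E}_i, \mc{ES}_i, \tilde{\mc{ES}}_i$ and the equal-division allocation in (\ref{eq:realization1})--(\ref{eq:realization2}). The key observation that makes this tractable is that in the original NE the demand--supply balance already sustained seller $k$ trading at a strictly higher price, which directly yields a lower bound for the numerator of the min in (\ref{eq:realization1}) after seller $j$'s deviation.
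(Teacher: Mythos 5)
Your proof is correct and follows essentially the same route as the paper's: a three-part contradiction argument (all trading sellers share a price, all trading buyers share a price, and the two coincide), each resolved by a small price deviation that preserves the deviator's transacted quantity and strictly improves the price term, contradicting \eqref{eq:def4}. You simply spell out more explicitly than the paper why the deviator still clears fully (via the demand--supply balance implied by the higher-priced trader's positive transaction and by Lemma~\ref{lem:pps21}), which is a welcome refinement of the same argument rather than a different approach.
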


\begin{proof}
We prove the lemma by contradiction.
We also start with the sellers' case by proving
\begin{equation}
\pi_j^*=\pi_k^*, ~\forall j,k\in\{i:i\in\mathcal{I},a_i^*=s\}.
\end{equation}

Assume that there exist $j$ and $k$ such that $$a_j^*=a_k^*=s, \pi_j^*<\pi_k^*,$$ then we know that $r_j(\boldsymbol{x}^*)=q_j^*$, and have
$$U_{j,1}(r_j((s,\pi_j^*,q_j^*),\boldsymbol{x}_{-j}^*))=(\pi_j^*\!-\!\theta)q_j^*+p_i\kappa (\revj{Q_j}\!-\!q_j^*\!-\!d_{j,h})$$
and 
$$U_{j,1}(r_j((s,\pi_j^*\!+\!\epsilon,q_j^*),\boldsymbol{x}_{-j}^*))\!=\!(\pi_j^*+\epsilon-\theta)q_j^*+p_i\kappa (\revj{Q_j}-q_j^*-d_{j,h}).$$

Hence, we have $$U_{j,1}(r_j((s,\pi_j^*+\epsilon,q_j^*),\boldsymbol{x}_{-j}^*))>U_{j,1}(r_j((s,\pi_j^*,q_j^*),\boldsymbol{x}_{-j}^*)),$$ which contradicts with (\ref{eq:def4}) in Definition 5.

Similarly, we know that
\begin{equation}
\pi_j^*=\pi_k^*, ~\forall j,k\in\{i:i\in\mathcal{I},a_i^*=b\}.
\end{equation}

Next, we prove that $\pi_j^*=\pi_k^*, ~\forall j,k\in\mathcal{I}$, also by contradiction. When $\pi_j^*<\pi_k^*, ~a_j^*=s,$ and $a_k^*=b$, we have
$$U_{j,1}(r_j((s,\pi_j^*+\epsilon,q_j^*),\boldsymbol{x}_{-j}^*))>U_{j,1}(r_j((s,\pi_j^*,q_j^*),\boldsymbol{x}_{-j}^*))$$ and $$U_{j,1}(r_k((b,\pi_k^*+\epsilon,q_k^*),\boldsymbol{x}_{-k}^*))>U_{j,1}(r_k((b,\pi_k^*,q_k^*),\boldsymbol{x}_{-k}^*)),$$ which contradicts with (\ref{eq:def4}) in Definition 5.

\end{proof}

By Lemma \ref{lem:pps21}, we know that all users can get their proposed quantity fully transacted, which will only happen when the total proposed buying quantity equals the total proposed selling quantity. By (\ref{eq:thm2}), we know that the users with type $p_i<(\hat{\pi}(\boldsymbol{n},\theta)-\theta)/\kappa$ will trade as a seller, and the users with type $p_i>\hat{\pi}(\boldsymbol{n},\theta)/\kappa$ will trade as a buyer. By Lemma \ref{lem:pps22}, we know that all the buyers and sellers will propose the same price. 
Combining the above analysis, we obtain the result in (\ref{eq:lem11}) and (\ref{eq:lem12}).\QEDB

\section{Proof of Theorem 1}\label{app:d}
First, by (\ref{eq:thm2}), (\ref{eq:lem11}), and (\ref{eq:lem12}), we can obtain the results in Table \ref{table:co1}.
Then we try to obtain $\hat{\pi}(\boldsymbol{n},\theta)$ in Table \ref{table:co1}. According to (\ref{eq:lem12}), we know that the proportion of sellers is $(\hat{\pi}(\boldsymbol{n},\theta)-\theta)/\kappa$ and the proportion of buyers is $1-\hat{\pi}(\boldsymbol{n},\theta)/\kappa$. Based on (\ref{eq:thm2}), we know that every seller proposes a quantity of $\revj{Q_i}-d_{i,l}$ and every buyer proposes a quantity of $d_{i,h}-\revj{Q_i}$. Hence, the total selling quantity
\begin{equation}
    Q_s=\sum_{i\in\{j: p_j\leq P_L(\hat{\pi}(\boldsymbol{n},\theta))\}}(\revj{Q_i}-d_{i,l}),
\end{equation}
and the total buying quantity 
\begin{equation}
    Q_b=\sum_{i\in\{j: p_j\geq P_H(\hat{\pi}(\boldsymbol{n},\theta))\}}(d_{i,h}-\revj{Q_i}).
\end{equation}
Based on (\ref{eq:lem11}), we know that the total selling quantity equals to the total buying quantity, i.e.,
\begin{equation}
Q_s=Q_b,
\end{equation}
which is equivalent to (\ref{eq:eqp}). \QEDB

\section{Proof of Proposition 2}\label{app:e}
First, for the users with $o_i=1$, according to (\ref{eq:thm2}) and (\ref{eq:epsilon}), we have $U_{i,1}(r_i((s,0,0),\boldsymbol{x}_{-i}))>U_{i,0}$ for all $\boldsymbol{x}_{-i}$. Hence, we have
\begin{equation}
n_i^{BR}(\boldsymbol{n}_{-i})=1
\end{equation}
for all users with $o_i=1$.

Next, for the users with $o_i=0$, according to Theorem 1, a user $i$ with $p_i\in[0,P_L(\hat{\pi}(\boldsymbol{n},\theta))]\cup[P_H(\hat{\pi}(\boldsymbol{n}),\theta),1]$ will trade if he switches to the DTM operator. According to (\ref{eq:thm2}) and (\ref{eq:epsilon}), the corresponding utility $U_{i,1}(r_i(\boldsymbol{x}^*(1,\boldsymbol{n}_{-i})))>U_{i,0}$ only when\footnote{Since the total number of users $I\rightarrow\infty$, we can approximate $\hat{\pi}(\boldsymbol{n},\theta)=\hat{\pi}((\hat{n}_i,\boldsymbol{n}_{-i}),\theta) ~\forall \hat{n}_i\in\mathcal{N}$ for any $\boldsymbol{n}_{-i},i\in\mathcal{I}$.} 
\begin{align}
    p_i\!<\!\frac{(\hat{\pi}(\boldsymbol{n},\theta)\!-\!\theta)(Q\!-\!D_l)\!-\!\frac{e}{2}(D_h\!+\!D_l)}{\kappa(Q\!-\!D_l)}\notag\\
    \text{ or }~p_i\!>\!\frac{\hat{\pi}(\boldsymbol{n},\theta)(D_h\!-\!Q)\!+\!\frac{e}{2}(D_h\!+\!D_l)}{\kappa(D_h\!-\!Q)}.
\end{align}
For the other users with $o_i=0$, the utility $U_{i,0}$ is higher than $U_{i,1}(r_i(\boldsymbol{x}^*(1, \boldsymbol{n}_{-i})))$.
Hence, by combining the two, we can obtain the results in Proposition 2.\QEDB

\section{Proof of Lemma 2}\label{app:f}
First, according to the analysis in Theorem 1, all the sellers and buyers will propose a same market price $\hat{\pi}(\boldsymbol{n}^*(\theta),\theta)$ satisfying (\ref{eq:price}). By substituting (\ref{eq:lpie}) and (\ref{eq:hpie}) into (\ref{eq:br2}), we can obtain (\ref{eq:lemma2}). 

Then, since the LHS of (\ref{eq:price}) is increasing in $\hat{\pi}(\boldsymbol{n}^*(\theta),\theta)$ and the RHS of (\ref{eq:price}) is decreasing in $\hat{\pi}(\boldsymbol{n}^*(\theta),\theta)$, we can see that there exists a unique solution for (\ref{eq:price}).

Next, if both the DTM users' $p_i$ for $i\in \mathcal{I}_1$ and the non-DTM users' $p_i$ for $i \in \mathcal{I} \backslash \mathcal{I}_1$ follow independent uniform distributions in the interval $[0,1]$, then equation (\ref{eq:price}) can be rewritten as 
\begin{align}\label{eq:last}
    &\alpha \frac{\hat{\pi}(\boldsymbol{n}^*(\theta),\theta)-\theta}{\kappa}(Q-D_l)\notag\\
    &+(1\!-\!\alpha)\frac{(\hat{\pi}(\boldsymbol{n}^*(\theta),\theta)\!-\!\theta)(Q\!-\!D_l)\!-\!\frac{1}{2}e(D_h\!+\!D_l)}{\kappa(Q-D_l)}(Q\!-\!D_l)\notag\\
    &=\alpha(1-\frac{\hat{\pi}(\boldsymbol{n}^*(\theta),\theta)}{\kappa})(D_h-Q)\notag\\
    &+(1\!-\!\alpha)(1\!-\!\frac{\hat{\pi}(\boldsymbol{n}^*(\theta),\theta)(D_h\!-\!Q)\!+\!\frac{1}{2}e(D_h\!+\!D_l)}{\kappa(D_h-Q)})(D_h\!-\!Q).
\end{align}
By solving (\ref{eq:last}), we obtain the results in (\ref{eq:pi}).\QEDB

\end{document}